\definecolor{AyaFn}{HTML}{00627a}
\definecolor{AyaConstructor}{HTML}{067d17}
\definecolor{AyaStruct}{HTML}{00627a}
\definecolor{AyaGeneralized}{HTML}{00627a}
\definecolor{AyaData}{HTML}{00627a}
\definecolor{AyaPrimitive}{HTML}{00627a}
\definecolor{AyaKeyword}{HTML}{0033b3}
\definecolor{AyaComment}{HTML}{8c8c8c}
\definecolor{AyaField}{HTML}{871094}
\NewDocumentCommand{\stepsTo}{m}{\xmapsto{#1}}
\NewDocumentCommand{\recvEx}{}{ \operatorname{\mathsf{recv}} }
\NewDocumentCommand{\sendEx}{}{ \operatorname{\mathsf{send}} }
\NewDocumentCommand{\procEx}{mm}{ \textsf{proc}[#1](#2) }
\NewDocumentCommand{\procFwd}{mm}{ \textsf{fwd}[#1](#2) }
\NewDocumentCommand{\AtomicObjPrefix}{}{ \textsf{Atomic} }
\NewDocumentCommand{\AtomicObj}{m}{ \AtomicObjPrefix(#1) }
\NewDocumentCommand{\lwith}{}{\mathrel{\&}}
\NewDocumentCommand{\fwdEx}{}{ \operatorname{\mathsf{fwd}} }
\NewDocumentCommand{\letEx}{}{ \operatorname{\mathsf{let}} }
\NewDocumentCommand{\partialmap}{}{\rightharpoonup}
\NewDocumentCommand{\applyCompl}{mm}{ \operatorname{\textsf{applyCompl}}(#1, #2) }
\NewDocumentCommand{\valueInterp}{m}{\mathcal{V}\lrbracket{#1}}
\NewDocumentCommand{\termInterp}{m}{\mathcal{E}\lrbracket{#1}}
\NewDocumentCommand{\Cfg}{}{\mathsf{Cfg}}
\NewDocumentCommand{\NCfg}{}{\mathsf{NCfg}}
\NewDocumentCommand{\Obj}{}{\mathsf{Obj}}
\NewDocumentCommand{\NObj}{}{\mathsf{NObj}}
\NewDocumentCommand{\Labels}{}{\mathbb{A}}
\NewDocumentCommand{\FMSet}{m}{\mathsf{FMSet}(#1)}
\NewDocumentCommand{\Actions}{}{\mathsf{Actions}}
\NewDocumentCommand{\fmconcat}{}{\mathrel{⊎}}
\NewDocumentCommand{\evidence}{m}{\tag*{#1}}
\NewDocumentCommand{\vProc}{}{P}
\NewDocumentCommand{\nvProc}{}{\contour{black}{\color{white}$P$}}
\NewDocumentCommand{\vCfg}{}{\Omega}
\NewDocumentCommand{\nvCfg}{}{\contour{black}{\color{white}$\Omega$}}
\NewDocumentCommand{\textcode}{m}{\textsf{#1}}
\NewDocumentCommand{\textlibraryname}{m}{\texttt{#1}}
\NewDocumentCommand{\ProcLang}{}{\textcode{ProcLang}}
\NewDocumentCommand{\TODO}{m m}%
  {{\bfseries\color{#1}[#2]}}%
\crefname{example}{Example}{Examples}
\Crefname{example}{Example}{Examples}
\title{A Language-Agnostic Logical Relation for Message-Passing Protocols}
\NewDocumentCommand{\cmuAddress}{}{Carnegie Mellon University, Pittsburgh, USA}
\author{Tesla Zhang}{\cmuAddress}{teslaz@cmu.edu}{https://orcid.org/0000-0002-9050-846X}{}
\author{Sonya Simkin}{\cmuAddress}{ssimkin@andrew.cmu.edu}{https://orcid.org/0009-0008-9261-6318}{}
\author{Rui Li}{\cmuAddress}{ruil3@andrew.cmu.edu}{https://orcid.org/0009-0006-3555-9770}{}
\author{Yue Yao}{\cmuAddress}{yueyao@cs.cmu.edu}{https://orcid.org/0000-0001-8523-5156}{}
\author{Stephanie Balzer}{\cmuAddress}{balzers@cs.cmu.edu}{https://orcid.org/0000-0002-8347-3529}{}
\authorrunning{J. Open Access and J.\,R. Public} 
\keywords{Logical relations, message-passing protocols, verification, type system, session types, intuitionistic linear logic} 
\begin{document}

\maketitle

\begin{abstract}
Today's computing landscape has been gradually shifting to
applications targeting distributed and \emph{heterogeneous} systems,
such as cloud computing and Internet of Things (IoT) applications.
These applications are predominantly \emph{concurrent},
employ \emph{message-passing},
and interface with \emph{foreign objects},
ranging from externally implemented code to
actual physical devices such as sensors.
Verifying that the resulting systems adhere to
the intended protocol of interaction is challenging---the
usual assumption of a common implementation language,
let alone a type system,
no longer applies,
ruling out any verification method based on them.
This paper develops a framework for certifying
\emph{protocol compliance} of heterogeneous message-passing systems.
It contributes the first mechanization of a
\emph{language-agnostic logical relation},
asserting that its inhabitants comply with the protocol specified.
This definition relies entirely on a
labelled transition-based semantics,
accommodating arbitrary inhabitants,
typed and untyped alike, including foreign objects.
As a case study, the paper considers two scenarios:
(1) \emph{per-instance verification} of a specific application or
hardware device, and
(2) \emph{once-and-for-all verification} of well-typed applications
for a given type system.
The logical relation and both scenarios are mechanized in
the Coq theorem prover.
\end{abstract}

\section{Introduction}%
\label{sec:intro}
Modern computing applications are increasingly becoming \emph{heterogeneous}, with many systems needing to interface with \emph{foreign objects}.
These foreign objects can range from code that is externally implemented (in a language that may or may not be the same as the one internal to the system)
to actual hardware devices which communicate with the system. 
An example of such an application can be found in cloud computing. In this setting, users may outsource resource-intensive computations to a cloud service, which scales the computation across several machines.
As another example, consider a smart home system for monitoring air quality. The controller of the system communicates with different
sensors around the home to gather data on surrounding air temperature, humidity, pressure, etc.
Such sensors are \emph{hardware} devices, and the controller must interact with through a protocol defined
by the manufacturer in a specification.

Heterogeneity poses as a major challenge when one seeks to certify these systems against a given protocol specification. In particular,
there are three primary obstacles that arise in this setting:
\begin{enumerate}
    \item\label{obstacle1} \emph{Lack of a common specification language}. In a homogeneous setting, we would easily be able to detail a specification in whatever language each of our components is written in.
    However, in the presence of foreign objects, there may be many languages at play with no unified specification language.
    In fact, there may not even be a notion of a ``source language'' for certain objects, as evident with the sensor example.
    \item\label{obstacle2} \emph{Lack of a common computational model}. In a similar vein, we do not have a common method to specify the runtime behavior of 
    arbitrary objects in the system, which can include both programs and physical devices.
    \item\label{obstacle3} \emph{Lack of compositionality}. Compositionality (i.e. modularity) allows for different components of a system to be
    verified independently and combined in a way that guarantees a verified whole, \emph{without} the burden of re-verifying the entire system.
    In addition to facilitating reasoning about a system, compositionality is also crucial to ensuring scalability.
    For an arbitrary heterogeneous system, we may not necessarily have a way to compose the verification of any smaller part.
\end{enumerate}
To combat all three of these issues, we employ three techniques: \emph{types} for \emph{message-passing protocols}, \emph{labelled transition systems}, and \emph{logical relations}.

To address \cref{obstacle1}, we use \emph{types} as a specification language. We are particularly
interested in the verification of \emph{concurrent}, \emph{message-passing} based systems, as those are the ones dominating the current computing landscape.
As such, we choose \emph{behavioral} types \cite{AnconaARITCLE2016,GayRavaraBOOK2017} that are able to express the communications between objects.
This provides us with a unified way to describe specifications of any of the components of a system.

To address \cref{obstacle2}, we use a \emph{labelled transition system} (LTS) \cite{MilnerBook1980, MilnerBook1999, SangiorgiWalkerBook2001}
to express the message-passing behavior of any program or device in a system. 
In an LTS, each transition on an object is annotated with an \emph{action} that describes the willingness of that object
to engage in communication.
Using an LTS allows us to abstract from the syntactic representation of any particular object in a system, since we are able to describe the
behavior of any component simply by the action that it performs.

To address \cref{obstacle3}, we use \emph{logical relations} \cite{GirardPhD1972, PittsStarkHOOTS1998, PlotkinTR1973, StatmanARTICLE1985, TaitARTICLE1967}
as our verification framework. Logical relations are a technique that allows one to prescribe properties of valid programs in terms of their 
\emph{computational behavior}, as opposed to solely their static properties.
Crucial to our development is the concept of \emph{semantic typing} \cite{ConstableBook1986, LoefARTICLE1982, TimanyJACM2024},
which allows terms that are not necessarily (syntactically) well-typed to be an inhabitant of the logical relation.
Such a semantic approach allows one to prove the inhabitance of not only well-typed terms (via the ``fundamental theorem''
of the logical relation), but also of untyped terms which behave ``correctly'' with respect to the definition of the logical relation.
This property has paved the way for multi-language logical relations, as seen in those developed for compiler correctness proofs \cite{BentonICFP2009, ChlipalaPLDI2007, MinamidePOPL1996}
and soundness of language interoperability \cite{PattersonPLDI2022}.
Another crucial aspect of logical relations (particularly for verification) is the fact that logical relations inherently facilitate \emph{compositionality}---any two inhabitants of the logical relation can be composed 
to a compound inhabitant, as dictated by the type structure of the underlying language.
Semantic logical relations are thus particularly well suited to accommodate the heterogeneity of the today's applications.

In addition to providing the definition of a verification framework for heterogeneous systems, we are equally invested in providing a \emph{mechanization} of such a framework,
which can be found in the following GitHub repository: \url{https://github.com/balzers/LAgnoLR}.
Our community has a long-standing tradition of making mathematically precise definitions of programming languages to admit rigorous reasoning upon them.
However, with the increasing complexity of these definitions and subsequent theorems, our community has come to realize that the use of proof assistants in this process is necessary.
Mechanization not only provides a machine certification of our proofs,
but also scales easily with future extensions,
unlike pen-and-paper proofs.

In this paper, we present the first mechanization of a \emph{language-agnostic} logical relation for certifying protocol compliance of message-passing heterogeneous systems. 
Being language-agnostic means that we are able to verify systems that consist of any kinds of foreign objects, accommodating for the many heterogeneous systems of today.
This work is based on the development in Yue \textit{et al.} \cite{YaoPOPL2025}, which defines a logical relation for time-dependent heterogeneous systems using message-passing types and a labelled transition system.
We generalize this work and provide a mechanization of the logical relation in Coq, as well as the mechanization of two verification scenarios in our framework.

\subparagraph*{Contributions.}

Our contributions are
\begin{itemize}

\item A formalization of a \emph{language-agnostic} logical relation for protocol compliance of message-passing systems. Lifting the burden of syntactic convergence between the components of a system, the relation accommodates arbitrary objects as inhabitants---typed and untyped alike, including actual physical devices.

\item A case study encompassing inhabitance proofs of two kinds: (1) \emph{per-instance verification} of a specific application or
hardware device, and
(2) \emph{once-and-for-all verification} of well-typed applications
for a given type system.

\item A mechanization of the above contributions in the Coq theorem prover.

\end{itemize}

\subparagraph*{Paper structure.}
In \cref{sec:verification-framework}, we introduce and provide intuition for all of the relevant concepts and definitions for our verification framework.
In \cref{sec:case-study}, we showcase the two modes of verification our framework accommodates through the ``per-instance'' verification of a particular object and the development of a ``once-and-for-all'' verification method for a given type system.
In \cref{sec:mechanization}, we discuss the mechanization of the developments in \cref{sec:verification-framework} and \cref{sec:case-study}, and we follow with an exploration of future work (\cref{sec:discussion-future}), contrasting with related work (\cref{sec:related}), and a conclusion (\cref{sec:conlusion}).

\section{Verification Framework}%
\label{sec:verification-framework}
In this section, we provide definitions for the three critical components of our verification framework: 
the \emph{computational model}, which is defined with a labelled transition system, the \emph{protocol specification language} of
types for message-passing objects, and the \emph{logical relation}.

\subsection{Computational Model}%
\label{sub:comp-model}

In order to provide a unified specification for any arbitrary object in our system, 
we define a \emph{labelled transition system} (LTS) \cite{MilnerBook1980, MilnerBook1999, SangiorgiWalkerBook2001}.
Defining our computational model in terms of an LTS allows us to \emph{abstractly} describe communications between different parties in a system,
making it particularly well-suited for a heterogeneous message-passing setting.
In an LTS, each transition is annotated with an \emph{action}, which describes the readiness of an object to engage in communication.
An action can be a readiness to \emph{send} a message, a readiness to \emph{receive} a message, and an empty action.
The empty action represents an actual message exchange, which occurs when two objects have \emph{complementary} actions (i.e. a send and a receive)
and are able to proceed with communication.

To provide some intuition for how the LTS can describe the communication between different parties in a system, consider the following example:

\begin{example}\label{ex:automaton}
  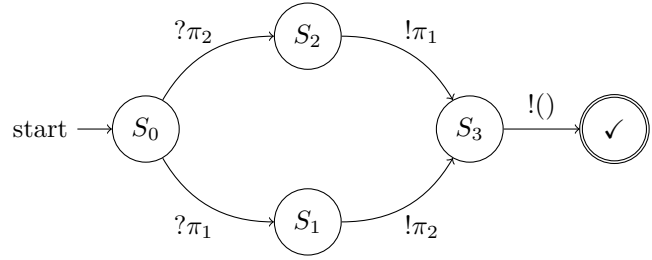
\begin{figure}[h]
    \centering
    \begin{tikzpicture}[]
        \node[state, initial] (S0) {$S_{0}$};
        \node[state, below right=of S0, xshift=5mm, yshift=4mm] (S1) {$S_{1}$};
        \node[state, above right=of S0, xshift=5mm, yshift=-4mm] (S2) {$S_{2}$};
        \node[state, below right=of S2, xshift=5mm, yshift=4mm] (S3) {$S_{3}$};
        \node[state, accepting, right=of S3] (S4) {$\checkmark$};

        \draw[->] (S0) edge[bend right, below] node[xshift=-2mm] {$?\pi_1$} (S1);
        \draw[->] (S0) edge[bend left,  above] node[xshift=-2mm] {$?\pi_2$} (S2);
        \draw[->] (S1) edge[bend right, below] node[xshift=2mm] {$!\pi_2$} (S3);
        \draw[->] (S2) edge[bend left,  above] node[xshift=2mm] {$!\pi_1$} (S3);
        \draw[->] (S3) edge node[above] {$! ()$} (S4);
    \end{tikzpicture}
    \caption{Bit-flipping automaton}
    \label{fig:bit-flipping-automaton}
\end{figure}

  The diagram in \cref{fig:bit-flipping-automaton} describes a bit-flipping automaton, which is akin to a signalling object in an Internet of Things (IoT) application. We start at state $S_0$, and we have two possible transitions from that state,
  labelled $? π_1$ and $? π_2$. These labels describe the transition for when the state $S_0$ \emph{receives} a boolean-valued label of either $π_1$ (\textcode{false}) or $π_2$ (\textcode{true}).
  In the former case, it will transition to state $S_1$, and in the latter to $S_2$. After this, the states $S_1$ and $S_2$ have transitions labeled with $! π_2$ and $! π_1$ (respectively), which describe
  the \emph{sending} of those boolean-valued labels in either case. Once the send is performed, both states transition to $S_3$, which then transitions to a terminal state $\checkmark$
  with the label $! ()$, sending a \emph{closing signal} to indicate the end of all communication.
\end{example}

With this intuition in mind, we proceed with formally defining the LTS, and we begin by stating all the relevant definitions for \emph{actions}.
Actions rely on the notion of a \emph{channel},
which is an identifier that a client (e.g. a controller) who wishes to communicate with an object (e.g. a sensor) chooses for the object.
Channels thus serve as the object's point of contact and can be part of a message's content, i.e. its \emph{payload}:

\begin{defn}[Payload]
We define a \emph{payload} to be one of the following:
\begin{itemize}
\item A boolean-valued \emph{selector}, denoted $π_1$ or $π_2$, or
\item A \emph{closing signal} denoted $()$, or
\item A \emph{channel name} $a ∈ \Labels$.
\end{itemize}
\end{defn}

Note that the set $\Labels$ is some countably infinite set of identifiers. We may now proceed to formally define \emph{actions}:

\begin{defn}[Action]
We define an \emph{action} to be one of the following:
\begin{itemize}
\item The constant $ε$, or
\item A triple $(a, d, p)$ where $a ∈ \Labels$ is a channel name,
$d$ is the \emph{direction}, which can be either $!$ (sending) or $?$ (receiving), and $p$ is a payload.
\end{itemize}
Note that the constant $ε$ stands for an empty action (also known as a \emph{silent transition}).
This corresponds to $τ$-transitions in $π$-calculus~\cite{MilnerBook1980,MilnerBook1999,SangiorgiWalkerBook2001}.
\end{defn}

In our computational model, the smallest unit of computation is an \emph{atomic process},
which can communicate with other processes through channels,
akin to processes in the $\pi$-calculus~\cite{MilnerBook1980,MilnerBook1999,SangiorgiWalkerBook2001}.
So far we have been referring to such a unit colloquially as an ``object.''

\begin{defn}[Atomic Process]
We define a set of \emph{atomic process}, written $\AtomicObj{A}$,
which has the following elements:
\begin{itemize}
\item $\procEx a {\nvProc}$, for a channel name $a ∈ \Labels$ and an element $\nvProc ∈ A$, or
\item $\procFwd a b$, for a pair of channel names $a, b ∈ \Labels$.
\end{itemize}
Atomic processes are parameterized by a set $A$,
which we define next.
\end{defn}

Each atomic process has a \emph{providing channel}, which is the process' identifier (as chosen by the client) and acts as a point of contact. The channel $a$ in both $\procEx a {\nvProc}$
and $\procFwd a b$ is the providing channel.
The set $A$ contains process terms parameterized over a providing channel. In this sense,
the elements of $A$ are considered to be ``nameless,'' and become ``named'' once a providing
channel is supplied through the $\procEx a {-}$ construct.

We now define the \emph{process language} structure, which we use to abstract over the components of a heterogeneous system. Each structure
defines a set of \emph{nameless objects} ($\NObj$), which describes the ``terms'' in the language that are able to interact with a supplied channel,
and a \emph{transition relation} ($\stepsTo{\cdot}_\Obj$) on those terms, describing how named terms transition to (possibly many) other atomic processes according to some action.
The choice to have \emph{nameless} terms at the language level not only simplifies the treatment of identifiers in our development,
but also allows objects to be defined independently of a particular choice of providing channel (i.e. they are \emph{polymorphic} in the providing channel).
The latter aspect is especially apt for our target domain. The names of the components in  
heterogeneous systems often take the form of \emph{addresses}, such as IP addresses. These addresses 
are usually assigned as the system is wired up, once the components themselves have already been programmed. 
Conceptually, the functionality of the components are---and should be---independent of the names of the objects. 
This dictum is reflected by our use of nameless objects, which is inspired by the notion of
\emph{nameless family of configurations} in Yue \textit{et. al.} \cite{YaoPOPL2025}.

The definition of the process language structure relies on the notion of a finite multiset, which is given first:

\begin{defn}[Finite Multisets (FMSet)]
We define a \emph{finite multiset} over a set $X$, denoted by $\FMSet{X}$,
to be a finite set with possibly repeated elements drawn from $X$.
We denote it as $\FMSet{X}$, along with a disjoint union operator $\fmconcat$ with unit $\varnothing$.
\end{defn}

\begin{defn}[Process Language]\label{defn:proc-lang}
We define a process language to be a structure with the following data:
\begin{itemize}
\item A set of \emph{nameless objects} $\NObj$, which can be used as the argument of $\AtomicObjPrefix$, and
\item A \emph{transition relation} $\stepsTo{\cdot}_\Obj$ with the following type:
\[
\stepsTo{\cdot}_\Obj : (\NObj × \Labels) × \Actions × \FMSet{\AtomicObj{\NObj}} → \textsf{Prop}
\]
\end{itemize}
We also define the following notation:
\begin{itemize}
\item \ProcLang{} for the set of process language structures in the system,
\item $\vProc \stepsTo{a!c}_\Obj \vCfg$ to say that an object $\vProc$ and a set of atomic objects $\vCfg$ satisfy the relation
  $\stepsTo{\cdot}_\Obj$ where the arguments are a label $a$, the direction $!$,
  and a payload $c$.
\item $\vProc \stepsTo{a?c}_\Obj \vCfg$ to say that an object $\vProc$ and a set of atomic objects $\vCfg$ satisfy the relation
  $\stepsTo{\cdot}_\Obj$ where the arguments are a label $a$, the direction $?$,
  and a payload $c$.
\item $\vProc \stepsTo{ε}_\Obj \vCfg$ to say that an object $\vProc$ and a set of atomic objects $\vCfg$ satisfy the relation
  $\stepsTo{\cdot}_\Obj$ where the argument is the empty action $ε$.
\end{itemize}
\end{defn}

To represent a \emph{heterogeneous} set of components, we use the $Σ$-type
$\sum_{S ∈ \ProcLang}S.\NObj$. Additionally, we write:
\begin{itemize}
\item $\NObj$ as a shorthand for the type $\sum_{S ∈ \ProcLang}S.\NObj$,
\item $\Obj$ as a shorthand for the type $\AtomicObj{\NObj}$,
\end{itemize}
This $Σ$-quantification of the language means the processes can possibly be from different process languages.

With all of these definitions in place, we can now define the operational model our framework relies on:

\begin{defn}[Runtime System]\label{defn:runtime-system}
We define a \emph{runtime system} to consist of the following data:
\begin{itemize}
\item The multiset of objects $\Cfg := \FMSet{\Obj}$, called \emph{configurations},
  as well as the nameless version of them, $\NCfg := \Cfg × \NObj$,
\item An instantiation operation $\nvCfg[a]$ for $\nvCfg ∈ \NCfg$ and $a ∈ \Labels$,
  which destructs $\nvCfg$ as $\vCfg × \nvProc$ and returns $\vCfg \fmconcat{} \procEx a {\nvProc}$,
\item A \textit{transition relation} $\stepsTo{\cdot}$ between $\Cfg$,
  given by the rules shown in~\cref{fig:cfg-rules},
\item A multistep version $\stepsTo{}^*$ of the transition relation where the action is $ε$, and
\item A \textit{nameless transition relation} $\stepsTo{}_\NCfg$ between elements of type $\NCfg$.
\end{itemize}
\end{defn}

\begin{figure}[h!]
\centering
\begin{mathpar}
\inferrule[Step-Obj]{\vProc \stepsTo{α}_\Obj \vCfg'}{\set{\vProc} \stepsTo{α} \vCfg'} \and
\inferrule[Step-Fwd]{ ~ }{ \Set{ \procEx a {\nvProc}, \procFwd b a } \stepsTo{α} \Set{ \procEx b {\nvProc} } } \and
\inferrule[Step-Frame]{\vCfg \stepsTo{α} \vCfg'}{\vCfg \fmconcat \vCfg_0 \stepsTo{α} \vCfg' \fmconcat \vCfg_0} \and
\inferrule[Step-Comm]{\vCfg_1 \stepsTo{a!c} \vCfg_1' \\ \vCfg_2 \stepsTo{a?c} \vCfg_2'}{\vCfg_1 \fmconcat \vCfg_2 \stepsTo{} \vCfg_1' \fmconcat \vCfg_2'}
\end{mathpar}
\caption{Configuration stepping rules}
\label{fig:cfg-rules}
\end{figure}

The transition rules in~\cref{fig:cfg-rules} are as follows:
\begin{enumerate}
\item \textsc{Step-Obj}: An inclusion from language-level transition to configuration-level transition,
\item \textsc{Step-Fwd}: The operational semantics of forwarding processes,
\item \textsc{Step-Frame}: The \emph{frame rule}, similar to the one in $π$-calculus \cite{MilnerBook1980, MilnerBook1999,SangiorgiWalkerBook2001}, and
\item \textsc{Step-Comm}: The \emph{communication rule}, which performs a message exchange.
\end{enumerate}

The nameless transition relation $\stepsTo{}_\NCfg$ is defined by:
\begin{itemize}
\item $\nvCfg \stepsTo{}_\NCfg \nvCfg'$ if and only if $∀ a.~\nvCfg[a] \stepsTo{} \nvCfg'[a]$, and
\item $\nvCfg \stepsTo{}_\NCfg^* \nvCfg'$ if and only if $∀ a.~\nvCfg[a] \stepsTo{}^* \nvCfg'[a]$
\end{itemize}

If a process steps to an empty multiset, we consider it to be a process that \textit{terminates},
and it disappears from the environment.

\begin{lem}[Multistep Frame]
The multistep version of the frame rule is admissible:
\begin{mathpar}
\inferrule{\vCfg \stepsTo{}^* \vCfg'}{\vCfg \fmconcat \vCfg_0 \stepsTo{}^* \vCfg' \fmconcat \vCfg_0} \and
\end{mathpar}
\end{lem}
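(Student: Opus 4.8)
The plan is to prove this by induction on the derivation witnessing $\vCfg \stepsTo{}^* \vCfg'$, treating $\stepsTo{}^*$ as the reflexive–transitive closure of the single-step relation $\stepsTo{}$ restricted to the empty action $\epsilon$. The key external ingredient is the single-step frame rule \textsc{Step-Frame} from \cref{fig:cfg-rules}, which I will instantiate at action $\alpha := \epsilon$ and frame $\vCfg_0$.

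In the base case the multistep sequence is empty, so $\vCfg' = \vCfg$ up to multiset equality; then $\vCfg' \fmconcat \vCfg_0 = \vCfg \fmconcat \vCfg_0$ and reflexivity of $\stepsTo{}^*$ closes the goal. In the inductive step we have $\vCfg \stepsTo{} \vCfg'' \stepsTo{}^* \vCfg'$ for some intermediate $\vCfg''$. The first transition is an $\epsilon$-step, so \textsc{Step-Frame} (with $\alpha := \epsilon$, frame $\vCfg_0$) gives $\vCfg \fmconcat \vCfg_0 \stepsTo{} \vCfg'' \fmconcat \vCfg_0$, which is again a legal $\epsilon$-step. Applying the induction hypothesis to $\vCfg'' \stepsTo{}^* \vCfg'$ yields $\vCfg'' \fmconcat \vCfg_0 \stepsTo{}^* \vCfg' \fmconcat \vCfg_0$. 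Prepending the framed first step and using transitivity of $\stepsTo{}^*$ produces $\vCfg \fmconcat \vCfg_0 \stepsTo{}^* \vCfg' \fmconcat \vCfg_0$, as required.

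I do not expect a genuine obstacle here: this is the routine lifting of a single-step congruence to its reflexive–transitive closure, and the only mildly delicate points are bookkeeping ones. First, one must observe that every step occurring in a $\stepsTo{}^*$ chain carries the empty action, which is precisely the shape \textsc{Step-Frame} consumes, so the framed steps remain well-formed $\epsilon$-steps rather than accidentally escaping to a labelled transition. Second, the induction must be run over whichever concrete presentation of $\stepsTo{}^*$ the mechanization fixes (an inductive relation with a reflexivity constructor and a one-step extension constructor, or a list of single steps); the symmetric variant that appends the framed step at the \emph{end} of the chain works just as well. Throughout, associativity and commutativity of $\fmconcat$ are used silently to re-associate the frame $\vCfg_0$, which is legitimate because $\fmconcat$ is the disjoint union on $\FMSet{-}$.
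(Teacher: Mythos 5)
Your proof is correct and is the standard argument the paper evidently intends (the lemma is stated without an explicit proof): induction on the length of the $\stepsTo{}^*$ chain, framing each constituent $\epsilon$-step with \textsc{Step-Frame} instantiated at $\alpha:=\epsilon$. Your bookkeeping remarks about every step in the chain being an $\epsilon$-step and about multiset reassociation are exactly the right points to flag, and nothing further is needed.
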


\subsection{Protocol Specification Language}%
\label{sub:protocol-spec-language}

The protocols are specified by \emph{behavioral types} \cite{AnconaARITCLE2016,GayRavaraBOOK2017}, defined below:
\[
A, B ::= 1 \mid A ⊗ B \mid A ⊕ B \mid A \lwith B \mid A ⊸ B
\]

The connectives include constructs for \emph{sequencing} messages and \emph{branching}. Sequencing is expressed with
the types $A ⊸ B$ and $A ⊗ B$. The type $A ⊸ B$ indicates that, after \emph{receiving} a channel of type $A$,
the protocol transitions to behaving as type $B$. Dually, type $A ⊗ B$ denotes the \emph{sending} of a
channel of type $A$ and proceeding as type $B$ afterwards. Branching is expressed by the types $A \lwith B$ and $A ⊕ B$. 
$A \lwith B$ \emph{offers} a choice between channels of type $A$ and $B$ (i.e. it is ready to \emph{receive} channels of either type),
and $A ⊕ B$ \emph{makes} a choice between channels of type $A$ and $B$ (i.e. it is ready to \emph{send} channels of either type).
The choice is conveyed by receiving and sending the labels $π_1$ or $π_2$, which are boolean-valued indicators.
The type $1$ denotes the terminal state of a protocol.

To provide some intuition for how these types specify message-passing protocols, consider the following example:

\begin{example}\label{ex:automaton-typing}
Recall the bit-flipping example from~\cref{ex:automaton} (as shown in \cref{fig:bit-flipping-automaton}).
We can specify the protocol of the bit-flipping automaton using \emph{behavioral types}. 
The type for the initial state $S_0$ of the bit-flipping automaton is $(1 ⊕ 1) \lwith (1 ⊕ 1)$, which indicates that the automaton can \emph{receive} either $\pi_1$ or $\pi_2$, and upon receiving will transition to a state behaving as type $1 ⊕ 1$, in either case. 
The type $1 ⊕ 1$ indicates that the automaton in that state can \emph{send} either $\pi_1$ or $\pi_2$, and afterwards transition to a state behaving as type $1$. 
Finally, the type $1$ indicates that the automaton is in a terminal state, at which point it can send the closing signal $()$ and end all communication.
\end{example}

\subsection{Logical Relation}%
\label{sub:logical-relation}

With both our operational model and specification language in place, we may now define our primary verification framework, which is the logical relation.
The \emph{logical relation} defines the intended runtime behavior of configurations for a given type, in a sense defining what it means for a configuration to be an ``inhabitant'' of that type and thus to comply with the protocol prescribed by the type.
For the following definitions, we assume that $\nvCfg$ is an element of the set $\NCfg$.
\begin{figure}[h!]
\centering
\begin{align}
\nvCfg ∈ \termInterp{A}
    \iff{}& ∃ \nvCfg' \text{ s.t. } \nvCfg \stepsTo{}_\NCfg^* \nvCfg' ∧ \nvCfg' ∈ \valueInterp{A} \label{termInterp} \\
\nvCfg ∈ \valueInterp{1}
    \iff{}& ∀a.~\nvCfg[a] \stepsTo{a!()} \varnothing \label{valueInterp:one} \\
\nvCfg ∈ \valueInterp{A ⊗ B}
    \iff{}& ∃a.~∀b.~∃ (\nvCfg_1 ∈ \termInterp{A}).~∃ (\nvCfg_2 ∈ \termInterp{B}).~\nvCfg[b] \stepsTo{b!a} \nvCfg_1[a] \fmconcat{} \nvCfg_2[b] \label{valueInterp:tensor} \\
\nvCfg ∈ \valueInterp{A ⊸ B}
    \iff{}& ∀a.~∀b.~∀ (\nvCfg_1 ∈ \termInterp{A}).~∃ (\nvCfg_2 ∈ \termInterp{B}).~\nvCfg[b] \fmconcat{} \nvCfg_1[a] \stepsTo{b?a} \nvCfg_2[b] \label{valueInterp:lolli} \\
\nvCfg ∈ \valueInterp{A \lwith B}
    \iff{}& (∀a.~∃ (\nvCfg_1 ∈ \termInterp{A}) . \nvCfg[a] \stepsTo{a ? π_1} \nvCfg_1[a]) ∧ \notag \\
        & (∀a.~∃ (\nvCfg_2 ∈ \termInterp{B}) . \nvCfg[a] \stepsTo{a ? π_2} \nvCfg_2[a]) \label{valueInterp:lwith} \\
\nvCfg ∈ \valueInterp{A ⊕ B}
    \iff{}& (∀a.~∃ (\nvCfg_1 ∈ \termInterp{A}) . \nvCfg[a] \stepsTo{a ! π_1} \nvCfg_1[a]) ∨ \notag \\
        & (∀a.~∃ (\nvCfg_2 ∈ \termInterp{B}) . \nvCfg[a] \stepsTo{a ! π_2} \nvCfg_2[a]) \label{valueInterp:lplus}
\end{align}
\caption{Definition of logical relation}
\label{fig:logical-relation}
\end{figure}

\cref{fig:logical-relation} gives the definition of the logical relation in terms of
two mutually recursive ``interpretations'' $\termInterp{A}$ and $\valueInterp{A}$,
often referred to as the ``term'' (or expression, computation) and ``value'' interpretations of type $A$.
Configurations $\nvCfg$ are in the term interpretation of $A$ if they perform some silent transitions (i.e. \emph{internal communications}), stepping to (in possibly multiple steps) some $\nvCfg' ∈ \valueInterp{A}$.
Configurations $\nvCfg$ are in the value interpretation of $A$ if they are ready to send or receive a message and thus engage in an \emph{external communication} with their client.
The value interpretation is defined by structural induction on the type,
specifying for each type what the expected runtime behavior of an element of that type is.

We now go through each case of the value interpretation, explaining how it captures the behavior specified in \cref{sub:protocol-spec-language} at each type.
Suppose we have some $\nvCfg ∈ \NCfg$. Then,
\begin{description}
  \item[\cref{valueInterp:one}] For all channels $a$, $\nvCfg$ is in the value interpretation at the type $1$ if
    $\nvCfg[a]$ sends a closing signal and transitions to the empty set.
  \item[\cref{valueInterp:tensor}] 
    $\nvCfg$ is in the value interpretation at the type $A ⊗ B$ if there exists a channel $a$ such that for all channels $b$,
    there exist some configurations $\nvCfg_1 ∈ \termInterp{A}$ and $\nvCfg_2 ∈ \termInterp{B}$, such that $\nvCfg[b]$ sends the channel $a$ to $b$ and transitions to the configuration containing $\nvCfg_1[a]$ and $\nvCfg_2[b]$.
    In other words, after sending the channel $a$ via the channel $b$, which will communicate something of type $A$, the configuration breaks down into a process that provides this communication (i.e. $\nvCfg_1[a]$) and a process that will continue afterwards (i.e. $\nvCfg_2[b]$).
  \item[\cref{valueInterp:lolli}]
    For all channels $a$ and $b$, $\nvCfg$ is in the value interpretation at the type $A ⊸ B$ if,
    for all configurations $\nvCfg_1 ∈ \termInterp{A}$, there exists some configuration $\nvCfg_2 ∈ \termInterp{B}$ such that $\nvCfg[b]$ with $\nvCfg_1[a]$ receives the channel $a$ at $b$ and transitions to $\nvCfg_2[b]$.
    In other words, channel $b$ receives the channel $a$, which provides a communication of type $A$ via the process $\nvCfg_1[a]$, and then continues with the process $\nvCfg_2[b]$ of type $B$.
  \item[\cref{valueInterp:lwith}] $\nvCfg$ is in the value interpretation at the type $A \lwith B$ if
    for all channels $a$, there exists some $\nvCfg_1 ∈ \termInterp{A}$ such that $\nvCfg[a]$ receives the boolean indicator $π_1$ and transitions $\nvCfg_1[a]$,
    \emph{and} for all channels $a$, there exists some $\nvCfg_2 ∈ \termInterp{B}$ such that $\nvCfg[a]$ receives the boolean indicator $π_2$ and transitions to $\nvCfg_2[a]$.
    In other words, the configuration $\nvCfg$ should be able to accommodate being sent a $π_1$ or $π_2$ across their providing channel, and will transition according to whichever it receives.
  \item[\cref{valueInterp:lplus}] $\nvCfg$ is in the value interpretation at the type $A ⊕ B$ if
    for all channels $a$, there exists some $\nvCfg_1 ∈ \termInterp{A}$ such that $\nvCfg[a]$ sends the boolean indicator $π_1$ and transitions $\nvCfg_1[a]$,
    \emph{or} for all channels $a$, there exists some $\nvCfg_2 ∈ \termInterp{B}$ such that $\nvCfg[a]$ sends the boolean indicator $π_2$ and transitions to $\nvCfg_2[a]$.
    In other words, the configuration $\nvCfg$ should either send a signal $π_1$ or $π_2$ across their providing channel, and will transition according to whichever it sends.
\end{description}

The relation is closed under pre-composition of stepping, as expected:
\begin{lem}[Backwards Closure]\label{lem:back-closure}
If $\nvCfg' ∈ \termInterp{A}$ and $\nvCfg \stepsTo{}^* \nvCfg'$, then $\nvCfg ∈ \termInterp{A}$.
\end{lem}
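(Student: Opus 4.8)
The plan is to observe that the term interpretation $\termInterp{A}$ is nothing more than ``reachability of some value configuration,'' so that \cref{lem:back-closure} follows purely from transitivity of the multistep relation, with no reference to the type $A$ or to the value interpretation $\valueInterp{A}$. (Here we read the hypothesis $\nvCfg \stepsTo{}^* \nvCfg'$ as the nameless multistep relation $\nvCfg \stepsTo{}_\NCfg^* \nvCfg'$ of \cref{defn:runtime-system}, which is exactly the relation occurring in \eqref{termInterp}.) Concretely, I would first unfold $\nvCfg' \in \termInterp{A}$ via \eqref{termInterp} to obtain a configuration $\nvCfg''$ with $\nvCfg' \stepsTo{}_\NCfg^* \nvCfg''$ and $\nvCfg'' \in \valueInterp{A}$; then chain $\nvCfg \stepsTo{}_\NCfg^* \nvCfg'$ with $\nvCfg' \stepsTo{}_\NCfg^* \nvCfg''$ into a single reduction $\nvCfg \stepsTo{}_\NCfg^* \nvCfg''$; and finally note that $\nvCfg \stepsTo{}_\NCfg^* \nvCfg''$ together with $\nvCfg'' \in \valueInterp{A}$ is, by \eqref{termInterp}, precisely $\nvCfg \in \termInterp{A}$, with $\nvCfg''$ as the existential witness.

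The only step that requires any argument is the chaining step, i.e.\ transitivity of $\stepsTo{}_\NCfg^*$. Since $\nvCfg_1 \stepsTo{}_\NCfg^* \nvCfg_2$ unfolds to $\forall a.~\nvCfg_1[a] \stepsTo{}^* \nvCfg_2[a]$, it suffices that the configuration-level relation $\stepsTo{}^*$ is transitive, which is immediate because $\stepsTo{}^*$ is a reflexive--transitive closure; if one prefers not to assume this, it follows by a routine induction on the length of the first reduction sequence, prepending single steps one at a time. An equivalent route, bypassing the intermediate transitivity lemma, is to prove \cref{lem:back-closure} directly by induction on the derivation of $\nvCfg \stepsTo{}_\NCfg^* \nvCfg'$: the reflexive case is trivial, and in the step case one applies the induction hypothesis to the tail and prepends the leading step to the witnessing reduction.

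I do not expect a genuine obstacle here: the statement is a structural bookkeeping lemma whose proof never inspects the inductive structure of $\valueInterp{\cdot}$, which is precisely why---unlike the value interpretation---it needs no case analysis on $A$. In the mechanization the only thing to get right is infrastructural, namely that transitivity of the nameless multistep relation (and hence of the underlying $\stepsTo{}^*$) is available as a reusable lemma and that the definitional unfolding of $\termInterp{A}$ in \eqref{termInterp} is arranged so the witness $\nvCfg''$ threads through cleanly.
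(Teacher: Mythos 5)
Your proof is correct and is exactly the intended argument: the paper states this lemma without proof (``as expected''), and the standard proof is precisely what you give --- unfold \eqref{termInterp}, chain the two reductions by transitivity of $\stepsTo{}_\NCfg^*$ (which reduces pointwise to transitivity of $\stepsTo{}^*$), and reuse the same value-interpretation witness, with no case analysis on $A$ needed. Your reading of the hypothesis as the nameless multistep relation is also the right one, since $\nvCfg$ and $\nvCfg'$ are elements of $\NCfg$.
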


\section{Case Study}%
\label{sec:case-study}
In this section, we explore two verification scenarios that our framework is able to accomodate: 
(1) \emph{per-instance verification} of a signalling object in a heterogenous system, modeled as a bit-flipping automaton, and
(2) \emph{once-and-for-all verification} of arbitrary well-typed programs for a given type system.

\subsection{Per-instance verification}

Our verification framework allows us to show that any component of a system is ``well-behaved'' as prescribed by the logical relation.
To demonstrate this, we will show that our running example of the bit-flipping automaton (as introduced in \cref{ex:automaton} and illustrated in \cref{fig:bit-flipping-automaton})
inhabits the logical relation and formally prove that it fits the protocol specification defined in \cref{ex:automaton-typing}.

To start, we need to implement the automaton as a process language structure, as defined in \cref{defn:proc-lang}. 
\begin{defn}[Bit-Flipping Automaton Language Structure]\label{defn:bit-flip-automaton-lang-struct}
We specify all the necessary fields to populate a process language structure for the automaton:
  \begin{itemize}
    \item A set of nameless objects $\NObj$, specified as the set of all states except the accepting state in the automaton, i.e.
    \[ \NObj := \{S_0, S_1, S_2, S_3\} \]
    The accepting state is not included because the final transition (i.e. the transition that sends a closing signal) steps to the empty set.
    \item A transition relation $\stepsTo{\cdot}_\Obj$ consisting of the steppings listed in \cref{fig:obj-level-stepping-flip-bit}.
  \end{itemize}
\end{defn}

\begin{figure}
\begin{align*}
(S_0, a) & \stepsTo{a?π_1}_\Obj \set{\procEx{a}{S_1}} \\
(S_0, a) & \stepsTo{a?π_2}_\Obj \set{\procEx{a}{S_2}} \\
(S_1, a) & \stepsTo{a!π_2}_\Obj \set{\procEx{a}{S_3}} \\
(S_2, a) & \stepsTo{a!π_1}_\Obj \set{\procEx{a}{S_3}} \\
(S_3, a) & \stepsTo{a!()}_\Obj \varnothing
\end{align*}
\caption{Object-level stepping for the bit-flipping automaton}
\label{fig:obj-level-stepping-flip-bit}
\end{figure}

With this definition in place, all that remains is to show that our automaton inhabits the logical relation. 
To do this, we state and prove the following theorem:

\begin{theorem}\label{thm:inhabit} 
  Take $(\varnothing, S_0) ∈ \NCfg$. Then, 
  $ (\varnothing, S_0) ∈ \termInterp{(1 ⊕ 1) \lwith{} (1 ⊕ 1)}$.
\end{theorem}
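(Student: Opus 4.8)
The plan is to unfold the mutually recursive definitions of $\termInterp{\cdot}$ and $\valueInterp{\cdot}$ in \cref{fig:logical-relation} and follow the transition structure listed in \cref{fig:obj-level-stepping-flip-bit}. The one observation that makes this routine is that every reachable state $S_0,S_1,S_2,S_3$ is already \emph{poised}: each is immediately ready to perform an external action, and no internal ($\epsilon$) transition ever occurs. Hence whenever the proof demands $(\varnothing,S_i)\in\termInterp{B}$, I can discharge the existential of \cref{termInterp} by taking the witness configuration to be $(\varnothing,S_i)$ itself (zero steps, by reflexivity of $\stepsTo{}_\NCfg^*$), reducing the goal immediately to $(\varnothing,S_i)\in\valueInterp{B}$. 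Since the automaton's states form a finite acyclic graph, iterating this reduction terminates. (The more general fact that silent prefixes can be absorbed is \cref{lem:back-closure}, but here none occur.)

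Concretely: first reduce the goal to $(\varnothing,S_0)\in\valueInterp{(1\oplus 1)\lwith(1\oplus 1)}$. By \cref{valueInterp:lwith} with $A=B=1\oplus 1$ this is two obligations; unfolding the instantiation operation, $(\varnothing,S_0)[a]=\{\procEx{a}{S_0}\}$, and \textsc{Step-Obj} applied to the first two lines of \cref{fig:obj-level-stepping-flip-bit} yields $\{\procEx{a}{S_0}\}\stepsTo{a?\pi_1}\{\procEx{a}{S_1}\}$ and $\{\procEx{a}{S_0}\}\stepsTo{a?\pi_2}\{\procEx{a}{S_2}\}$, so I take the witnesses $(\varnothing,S_1)$ and $(\varnothing,S_2)$, leaving $(\varnothing,S_1)\in\termInterp{1\oplus 1}$ and $(\varnothing,S_2)\in\termInterp{1\oplus 1}$. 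For the first, reduce to $(\varnothing,S_1)\in\valueInterp{1\oplus 1}$ and, in the disjunction of \cref{valueInterp:lplus}, pick the \emph{right} disjunct since $S_1$ emits $\pi_2$; \textsc{Step-Obj} on the third line gives $\{\procEx{a}{S_1}\}\stepsTo{a!\pi_2}\{\procEx{a}{S_3}\}$, with witness $(\varnothing,S_3)$. Symmetrically, $(\varnothing,S_2)\in\valueInterp{1\oplus 1}$ uses the \emph{left} disjunct (as $S_2$ emits $\pi_1$) with witness $(\varnothing,S_3)$. Both branches bottom out at $(\varnothing,S_3)\in\termInterp{1}$, hence at $(\varnothing,S_3)\in\valueInterp{1}$; by \cref{valueInterp:one} this requires $\{\procEx{a}{S_3}\}\stepsTo{a!()}\varnothing$ for all $a$, which is exactly \textsc{Step-Obj} applied to the last line of \cref{fig:obj-level-stepping-flip-bit}.

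I do not expect any genuine mathematical obstacle — the automaton is finite, deterministic and acyclic — so the only ``hard part'' is bookkeeping: keeping the alternation of $\forall a$ and the existential over configurations straight when unfolding the $\lwith$ and $\oplus$ clauses, consistently collapsing each $\termInterp{\cdot}$ goal to the corresponding $\valueInterp{\cdot}$ goal at a poised state, and — the single place a careless choice actually breaks the argument — selecting the disjunct of $\valueInterp{A\oplus B}$ that matches the label the automaton really sends ($\pi_2$ from $S_1$, $\pi_1$ from $S_2$) rather than the other one.
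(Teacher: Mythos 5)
Your proposal is correct and is essentially the paper's own argument (the paper relegates the proof to its Coq mechanization, but the intended reasoning is exactly this unfolding): take zero silent steps at each stage, instantiate, apply \textsc{Step-Obj} to the transitions of \cref{fig:obj-level-stepping-flip-bit}, and choose the $\oplus$ disjunct matching the emitted label. The witnesses, quantifier handling, and disjunct choices ($\pi_2$ from $S_1$, $\pi_1$ from $S_2$) all check out against \cref{fig:logical-relation}.
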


In this theorem, $(\varnothing, S_0)$ represents the initial state of the automaton. This is an element of $\NCfg$, as defined in \cref{defn:runtime-system}.
The $\varnothing$ indicates that there are no other atomic objects around.
The theorem then concludes that the automaton inhabits the term interpretation of the logical relation at the type $(1 ⊕ 1) \lwith{} (1 ⊕ 1)$---that is, the automaton behaves ``correctly'' with respect to this specification.
We have already informally reasoned this to be the case in \cref{ex:automaton-typing}, but proving this theorem, we can \emph{certify} this to be the case.

\subsection{Once-and-for-all Verification}
\label{sec:once-and-for-all-verification}

To perform this mode of verification, we need to develop a type system that is strong enough to ensure that any
well-typed term behaves as prescribed by the logical relation. The proof of this property is referred to as the \emph{fundamental theorem of the logical relation}
(FTLR). By carrying out the proof of the FTLR ``once-and-for-all,'' per-program verification reduces to a typechecking problem---if it typechecks, we can simply invoke the FTLR to get our desired result.
Additionally, if typechecking is decidable, proving the FTLR now makes program verification automatic.

For this case study, we will consider a language of process terms which can be specified with the types defined in \cref{sub:protocol-spec-language}.
The grammar for this language is defined in \cref{fig:term-grammar},
with the terms in the left column being constructors for the corresponding behavioral types in the right column.
The grammar contains boolean-valued signals $e$ (with $π_1$ being $\textcode{false}$ and $π_2$ being $\textcode{true}$), symbols $s$ (which can either be channel names $a ∈ \Labels$, where $\Labels$ is a countably infinite set of identifiers, or a variable $x$),
and process terms $M$.
Binding occurrences are denoted by $x ⇒ M$.
As detailed in the next section, we will type these process terms using
\emph{intuitionistic linear logic session types} \cite{CairesCONCUR2010,ToninhoESOP2013,ToninhoPhD2015}.

\begin{figure}[ht]
\begin{align*}
e ::={} & π₁ \mid π₂ \\
s ::={} & a \mid x \\
M ::={} & \fwdEx(s) \mid \letEx~x:A ← M₁; M₂ \\
  \mid{} & \sendEx() \mid \recvEx_x()
    && (\emph{for}~1) \\
  \mid{} & \recvEx(x ⇒ M) \mid \sendEx_s(s'); M
    && (\emph{for}~⊸) \\
  \mid{} & \sendEx(s); M \mid \recvEx_s(x ⇒ M)
    && (\emph{for}~⊗) \\
  \mid{} & \recvEx(π₁ ⇒ M₁ \mid π₂ ⇒ M₂) \mid \sendEx_s(e); M
    && (\emph{for}~\lwith) \\
  \mid{} & \sendEx(e); M \mid \recvEx_s(π₁ ⇒ M₁ \mid π₂ ⇒ M₂)
    && (\emph{for}~⊕)
\end{align*}
\caption{Grammar for the process language.}\label{fig:term-grammar}
\end{figure}

\subsubsection{Typing Rules}

\begin{figure}[h!]
\centering
\begin{mathparpagebreakable}
\inferrule[\textsc{Cut}]
  { Γ₁ ⊢ M₁ :: A \\
    Γ₂, x : A ⊢ M₂ :: B }
  { Γ₁, Γ₂ ⊢ (\letEx~x:A ← M₁ ; M₂) :: B }

\inferrule[\textsc{Id}]
  { }
  { x : A ⊢ \fwdEx(← x) :: A } \\

\inferrule[\textsc{1-Right}]
  { }
  { ⊢ \sendEx() :: 1 }

\inferrule[\textsc{1-Left}]
  { Γ ⊢ M :: C }
  { Γ, x:1 ⊢ (\recvEx_x(); M) :: C } \\

\inferrule[\textsc{$⊸$-Right}]
  { Γ, y : A ⊢ M :: B }
  { Γ ⊢ \recvEx(y ⇒ M) :: (A ⊸ B) }

\inferrule[\textsc{$⊸$-Left}]
  { Γ, x : B ⊢ M :: C } 
  { Γ, x : A ⊸ B, y : A ⊢ (\sendEx_x(y) ; M) :: C }

\inferrule[\textsc{$⊗$-Right}]
  { Γ ⊢ M :: B }
  { Γ, y : A ⊢ (\sendEx(y); M) :: (A ⊗ B) }

\inferrule[\textsc{$⊗$-Left}] 
  { Γ, x : B, y : A ⊢ M :: C }
  { Γ, x : A ⊗ B ⊢ \recvEx_x(y ⇒ M) :: C }

\inferrule[\textsc{$\lwith$-Right}]
  { Γ ⊢ M_1 :: A_1 \\
    Γ ⊢ M_2 :: A_2 }
  { Γ ⊢ \recvEx(π_1 ⇒ M_1 \mid π_2 ⇒ M_2) :: (A_1 \lwith{} A_2) }

\inferrule[\textsc{$\lwith$-Left}]
  { Γ, x : A_i ⊢ M :: C \\ i ∈ \set{1, 2} }
  { Γ, x : A₁ \lwith{} A₂ ⊢ (\sendEx_x(π_i); M) :: C }

\inferrule[\textsc{$⊕$-Right}]
  { Γ ⊢ M :: A_i \\ i ∈ \set{1, 2} }
  { Γ ⊢ (\sendEx(π_i); M) :: (A₁ ⊕ A₂) }

\inferrule[\textsc{$⊕$-Left}]
  { Γ, x : A ⊢ M₁ :: C \\
    Γ, x : B ⊢ M₂ :: C }
  { Γ, x : A ⊕ B ⊢ \recvEx_x(π₁ ⇒ M₁ \mid π₂ ⇒ M₂) :: C }

\end{mathparpagebreakable}

\caption{Typing rules for the process language.}%
\label{fig:typing-rules}
\end{figure}

To type the process terms defined in \cref{fig:term-grammar} we use \emph{session types} \cite{HondaCONCUR1993,HondaESOP1998,HondaPOPL2008}, and in particular \emph{intuitionistic linear logic session types}
\cite{CairesCONCUR2010,ToninhoESOP2013,ToninhoPhD2015}.
Session types are an instance of behavioral types \cite{AnconaARITCLE2016,GayRavaraBOOK2017}
with strong theoretical foundations, 
relating the session-typed $\pi$-calculus with linear logic
\cite{CairesCONCUR2010,WadlerICFP2012,ToninhoESOP2013,ToninhoPhD2015,LindleyMorrisESOP2015,KokkePOPL2019}.

To type the process terms in \Cref{fig:term-grammar}, we use a typing judgment of the form
$$Γ ⊢ M :: A,$$
which can be read as ``the process term $M$ provides the communication behavior (i.e. ``session'') of type $A$ under typing context $Γ$,'' where $Γ$ provides the typing of free variables in $M$.
The inference rules for this judgment are defined in~\cref{fig:typing-rules}
and are standard.

The rules in \cref{fig:typing-rules} are given in a \emph{sequent calculus},
describing the behavior of a session from the point of view of a provider of the session---expressed by so-called \emph{right rules}---as well as from the point of view of a client---expressed by so-called \emph{left rules}.
Computationally, the rules can be read bottom-up,
where the type of the conclusion denotes the protocol state of the object before the message exchange,
and the type of the premise denotes the protocol state after the message exchange.
The rules are thus in agreement with the behavior specified in \cref{sub:protocol-spec-language} at each type.
For example, in case of the right rule \textsc{$⊕$-Right}, the provider sends either of the labels $π_1$ or $π_2$
and then transitions to offering the session $A_i$, for $i ∈ \set{1, 2}$.
On the other hand, in case of the left rule \textsc{$⊕$-Left}, the client branches on the received label,
continuing with either $M₁$ or $M₂$.

While the typing judgment uses a context $Γ$ in a similar manner to the $λ$-calculus, the typing context here contains a list of \emph{bindings}, which we represent as a finite map
$Γ : \textsf{Vars} \partialmap \textsf{Types}$. In this definition, $\textsf{Vars}$ is the set of variables and $\textsf{Types}$ is the set of types.
Additionally, we use the conventional $λ$-calculus notation $Γ, x: A$ for the extension of contexts.
However, this should be interpreted as inserting the key-value pair $x \mapsto A$ into the map $Γ$.

\subsubsection{Runtime Process Terms}

At runtime, we assign channel names (i.e.~elements of the set $\Labels$) to variables in the context.
Such an assignment is called a \emph{substitution} $σ : \textsf{Vars} \partialmap \Labels$ from variables to names.
These substitutions work slightly differently from substitutions in the $λ$-calculus,
in that we are only substituting channel names for variables, not other terms with possibly free variables.
As a result, the possibility of incurring capture is ruled out in the first place.

Similar to the $λ$-calculus, we write $σ,b/x$ for extending the substitution $σ$ with the key-value
pair $x \mapsto b$, where $b$ is a channel name.
Note that this does not mean that substitutions are ordered in any way.
Additionally, we write $σ(x)$ for obtaining the name assigned to the variable $x$ from $σ$.
Thus, runtime process terms are obtained by applying a substitution to process terms $M$, which defined by structural induction on $M$.
The full definition can be found in~\cref{fig:subst}.
Notably, the operation $σ-x$ is used for removing bound variables $x$ from the substitution $σ$.

\begin{figure}[ht]
\begin{align*}
  \hat{σ}(\letEx~x:A ← M₁; M₂) &= \letEx~x:A ← \hat{σ}(M₁); \widehat{σ-x}(M₂) \\
  \hat{σ}(\fwdEx(← x)) &= \fwdEx(← σ(x)) \\
  \hat{σ}(\sendEx()) &= \sendEx() \\
  \hat{σ}(\recvEx_x(); M) &= \recvEx_{σ(x)}(); \widehat{σ-x}(M) \\
  \hat{σ}(\recvEx(y ⇒ M)) &= \recvEx(y ⇒ \widehat{σ-y}(M)) \\
  \hat{σ}(\sendEx_x(y); M) &= \sendEx_{σ(x)}(σ(y)); \hat{σ}(M) \\
  \hat{σ}(\sendEx(y); M) &= \sendEx(σ(y)); \hat{σ}(M) \\
  \hat{σ}(\recvEx_x(y ⇒ M)) &= \recvEx_{σ(x)}(y ⇒ \widehat{σ-x}(M)) \\
  \hat{σ}(\recvEx(π_1 ⇒ M₁ \mid π_2 ⇒ M₂)) &= \recvEx(π_1 ⇒ \hat{σ}(M₁) \mid π_2 ⇒ \hat{σ}(M₂)) \\
  \hat{σ}(\sendEx_x(π_i); M) &= \sendEx_{σ(x)}(π_i); \hat{σ}(M) \evidence{(for $i∈\set{1,2}$)} \\
  \hat{σ}(\sendEx(π_i); M) &= \sendEx(π_i); \hat{σ}(M) \evidence{(for $i∈\set{1,2}$)} \\
  \hat{σ}(\recvEx_x(π_1 ⇒ M₁ \mid π_2 ⇒ M₂)) &= \recvEx_{σ(x)}(π_1 ⇒ \hat{σ}(M₁) \mid π_2 ⇒ \hat{σ}(M₂))
\end{align*}
\caption{Substitution of terms in the process language.}\label{fig:subst}
\end{figure}

We then have the following lemmas on substitution:
\begin{lem}[Composition of Substitution]\label{lem:subst-comp}
For any term $M$, substitution $σ$, variable $x$, and channel name $b ∈ \Labels$:
\[
  \widehat{b/x}(\hat{σ}(M)) = \widehat{σ, b/x}(M)
\]
\end{lem}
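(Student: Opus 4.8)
The plan is to prove the identity by structural induction on $M$, with the statement generalized over $σ$, $x$, and $b$ so that the induction hypothesis can be reinstantiated at the thinned substitutions that appear under binders. (As with the analogous substitution-composition lemma for the $λ$-calculus, I will carry the standard proviso that $σ, b/x$ is a genuine extension, i.e.\ $x \notin \operatorname{dom}(σ)$; without it the leaf $\fwdEx(← x)$ with $x$ already mapped by $σ$ already fails.) The argument then amounts to walking the clauses of \cref{fig:subst}: every term former either distributes over both substitution applications unchanged, or, at a leaf, resolves a symbol against the substitution.

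For the clauses that do not thin the substitution --- $\fwdEx(← x)$, $\sendEx()$, $\sendEx_x(y); M$, $\sendEx(y); M$, the label senders, and both branching receivers --- I would lean on two elementary facts: applying $\widehat{b/x}$ to a channel name is the identity (only variables are rewritten), and, for any symbol $s$, resolving $s$ against $σ, b/x$ yields $b$ if $s = x$ and $σ(s)$ otherwise, which is exactly what one gets by resolving $s$ against $σ$ first and then against $b/x$. Each such case is then a single unfolding step plus, where there is a continuation $M$, an appeal to the induction hypothesis.

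The content of the proof lies in the clauses that thin the substitution before recursing --- $\letEx~y:A ← M_1; M_2$, $\recvEx_x(); M$, $\recvEx(y ⇒ M)$, and $\recvEx_x(y ⇒ M)$ --- each of which passes $\widehat{σ - z}$ into a subterm for the locally removed variable $z$. For these I would first establish, as a short standalone fact about the finite-map operations, that removal commutes with extension: $(σ, b/x) - z = (σ - z), b/x$ when $z \neq x$, whereas $(σ, b/x) - x = σ - x$; and correspondingly that $(b/x) - z$ is $b/x$ when $z \neq x$ and is the empty substitution (which acts as the identity on terms) when $z = x$. With this in hand, the case split on whether the removed variable $z$ equals $x$ closes the argument: if $z \neq x$, both sides rewrite --- via these identities --- to the induction hypothesis instantiated at $σ - z$; if $z = x$, both sides collapse to $\widehat{σ - x}$ applied to the subterm, with no use of the induction hypothesis. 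I expect this interplay between deletion and insertion in the substitution, rather than any of the \emph{purely mechanical} term-former cases, to be the only place the proof needs a genuine lemma; and because substitutions carry channel names rather than open terms, there is no capture to manage, so the bookkeeping is confined to the shadowing behaviour of the finite map.
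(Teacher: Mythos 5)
Your proof is correct and takes essentially the same route as the paper, whose entire argument is ``by induction on the structure of $M$ and expanding the definition of $\hat{σ}$''; your treatment of the binder cases via the commutation of removal with extension, and the explicit proviso $x \notin \operatorname{dom}(σ)$, merely fill in details the paper leaves implicit.
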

\begin{proof}
By induction on the structure of $M$ and expanding the definition of $\hat{σ}$.
\end{proof}

In order to state~\cref{lem:discard} and later on~\cref{defn:compl},
we need to introduce the following definition,
which extends binary relations to elementwise relations on finite maps.
The goal is to related contexts and substitutions---both are finite maps from variables.

\begin{defn}\label{defn:map_forall2}
We say two finite maps $m_1, m_2 : X \partialmap Y$ are \emph{related}
by a relation $R : Y × Y → \textcode{Type}$ when for every $x ∈ X$,
one of the following holds:
\begin{itemize}
\item $m_1$ and $m_2$ are both undefined at $x$,
\item $m_1$ and $m_2$ are both defined at $x$ and $R(m_1[x], m_2[x])$ holds.
\end{itemize}
\end{defn}

\begin{defn}\label{defn:map_ops}
We introduce the following notations:
\begin{itemize}
\item For $m : X \partialmap Y$ and $f : Y → Y'$,
we write $f(m)$ for applying $f$ to the values of $m$, without changing the keys.
\item For $m : X \partialmap Y$ and $x : X$,
we write $m - x$ for removing the key $x$ from $m$.
\end{itemize}
\end{defn}

Here are some immediate consequences of~\cref{defn:map_forall2}:

\begin{lem}\label{lem:map_forall2-simple-facts}
If $m_1, m_2$ are related by $R$, then the following are true:
\begin{enumerate}
\item $m_1$ and $m_2$ have the same domain,
\item If $R \implies R'$, then $m_1$ and $m_2$ are related by $R'$,
\item Only empty maps are related to empty maps,
\item If $R(y, y') \implies R'(f(y), y')$, then $f(m_1)$ is related to $m_2$ by $R'$,
\item If $R(y, y') \implies R'(y, f(y'))$, then $m_1$ is related to $f(m_2)$ by $R'$.
\end{enumerate}
\end{lem}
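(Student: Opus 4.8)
The plan is to prove each of the five items by directly unfolding \cref{defn:map_forall2} and performing the case analysis it provides: for a fixed key $x \in X$, either both maps are undefined at $x$, or both are defined at $x$ and the relation holds on their values. Throughout, I read each hypothesis of the form ``$R \implies R'$'' (and the variants appearing in items~2, 4, and~5) as the pointwise entailment $\forall y\, y'.\ R(y,y') \to R'(y,y')$, and I use from \cref{defn:map_ops} that $f(m)$ has exactly the same keys as $m$, with $f(m)[x] = f(m[x])$ whenever $m$ is defined at $x$.

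For item~1, fix $x$; in either branch of the case analysis $m_1$ is defined at $x$ if and only if $m_2$ is, so the two maps have the same domain. Item~2 is immediate: fixing $x$, the ``both undefined'' branch is unchanged, and in the ``both defined'' branch the hypothesis turns $R(m_1[x], m_2[x])$ into $R'(m_1[x], m_2[x])$, which is exactly what is required for $m_1$ and $m_2$ to be related by $R'$. Item~3 follows by combining item~1 with the observation that the empty map has empty domain: if $m$ is related to $\varnothing$, then $\operatorname{dom}(m) = \operatorname{dom}(\varnothing) = \varnothing$, so $m$ is empty, and the symmetric statement is identical.

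For item~4, fix $x$. Since $f(m_1)$ shares its keys with $m_1$, the maps $f(m_1)$ and $m_2$ are simultaneously undefined or simultaneously defined at $x$ (using item~1 applied to $m_1$ and $m_2$). In the defined case we have $R(m_1[x], m_2[x])$ by assumption, hence $R'(f(m_1[x]), m_2[x])$ by the pointwise hypothesis, and since $f(m_1)[x] = f(m_1[x])$ this says exactly $R'(f(m_1)[x], m_2[x])$; thus $f(m_1)$ is related to $m_2$ by $R'$. Item~5 is the mirror image, applying $f$ to the second map instead of the first.

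I expect no genuine obstacle here: every item reduces to the per-key disjunction of \cref{defn:map_forall2}, and the only mild care needed is to (i) read the ``$\implies$'' hypotheses as pointwise entailments on values, and (ii) invoke the fact that $f(-)$ preserves domains when establishing items~4 and~5. In the mechanization these are essentially one-line proofs obtained by destructing the per-key disjunction and rewriting.
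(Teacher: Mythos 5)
Your proposal is correct and matches the paper's intent: the paper presents these items as immediate consequences of \cref{defn:map_forall2} with no further argument, and your per-key case analysis (together with the observation that $f(-)$ preserves domains for items~4 and~5) is exactly the routine unfolding that justifies them.
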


A convenient way to represent the facts that two maps have the same domain is to
say that they are related by any suitable relation, so the above facts can be used to derive
other useful propositions about these maps. For example, we can define the following lemma for related type context maps $Γ$ and substitution maps $σ$:

\begin{lem}[Discard]\label{lem:discard}
For $Γ ⊢ M :: A$ and substitutions $σ, σ'$ such that $σ$ and $Γ$ are related maps,
and $σ$ disjoint with $σ'$,
\[ \widehat{σ \cup σ'}(M) = \hat{σ}(M) \]
\end{lem}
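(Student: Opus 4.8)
The plan is to reduce \cref{lem:discard} to the elementary fact that $\hat{σ}(M)$ depends on $σ$ only through the values $σ$ assigns to the free variables of $M$, which I write $\mathrm{fv}(M)$. Concretely, I would first prove an \emph{agreement lemma}: whenever two substitutions $σ_1, σ_2 : \textsf{Vars} \partialmap \Labels$ \emph{agree on} $\mathrm{fv}(M)$ --- meaning that for each $y ∈ \mathrm{fv}(M)$ either both are undefined at $y$, or both are defined and $σ_1(y) = σ_2(y)$, i.e.\ their restrictions to $\mathrm{fv}(M)$ are related by equality in the sense of \cref{defn:map_forall2} --- we have $\hat{σ_1}(M) = \hat{σ_2}(M)$. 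Granting this, \cref{lem:discard} follows in three short steps: (i) since $σ$ and $σ'$ are disjoint, $σ \cup σ'$ extends $σ$ and therefore agrees with it wherever $σ$ is defined; (ii) by \cref{lem:map_forall2-simple-facts}(1), relatedness of $σ$ and $Γ$ gives $\mathrm{dom}(σ) = \mathrm{dom}(Γ)$; and (iii) a \emph{free-variable containment} fact, namely that $Γ ⊢ M :: A$ implies $\mathrm{fv}(M) \subseteq \mathrm{dom}(Γ)$, places every free variable of $M$ inside $\mathrm{dom}(σ)$. Combining (i)--(iii), $σ \cup σ'$ and $σ$ agree on all of $\mathrm{fv}(M)$, so the agreement lemma yields $\widehat{σ \cup σ'}(M) = \hat{σ}(M)$.

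For the free-variable containment fact I would induct on the derivation of $Γ ⊢ M :: A$ over the rules of \cref{fig:typing-rules}. In \textsc{Id} the unique free variable is the unique context entry, and in \textsc{1-Right} there are none. In \textsc{Cut}, the induction hypotheses give $\mathrm{fv}(M_1) \subseteq \mathrm{dom}(Γ_1)$ and $\mathrm{fv}(M_2) \subseteq \mathrm{dom}(Γ_2) \cup \{x\}$, so $\mathrm{fv}(\letEx~x:A ← M_1; M_2) = \mathrm{fv}(M_1) \cup (\mathrm{fv}(M_2) \setminus \{x\}) \subseteq \mathrm{dom}(Γ_1, Γ_2)$. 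In the binding rules \textsc{1-Left}, \textsc{$⊸$-Right}, \textsc{$⊗$-Left} the bound variable occurs in the premise's context exactly to cover its free occurrences in the subterm, while \textsc{$\lwith$-Left}, \textsc{$⊕$-Left}, \textsc{$\lwith$-Right} and \textsc{$⊕$-Right} simply pass the (shared or unchanged) context to each premise. Every case closes from the induction hypotheses together with the evident identities relating $\mathrm{fv}(M)$ to the free-variable sets of its immediate subterms.

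The agreement lemma itself is proved by structural induction on $M$, unfolding $\hat{σ}(-)$ according to \cref{fig:subst}. The cases that pass their substitution unchanged to every subterm (the forwarding term $\fwdEx$, $\sendEx()$, and the sends/receives that merely resolve channel symbols and recurse with the same $σ$) are immediate: $\mathrm{fv}(M)$ decomposes into the symbols occurring directly in $M$ and the free-variable sets of the immediate subterms, so agreement on $\mathrm{fv}(M)$ restricts to agreement on each part and the induction hypotheses apply. The cases that require care are the four clauses of \cref{fig:subst} whose right-hand side contains a $\widehat{σ - x}$, i.e.\ $\letEx~x:A ← M_1; M_2$ and the receive forms that consume or rebind a channel. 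For these I would use the small map fact: if $σ_1$ and $σ_2$ agree on $S \setminus \{x\}$, then $σ_1 - x$ and $σ_2 - x$ agree on $S$ --- at $x$ both are undefined, and at any $y ∈ S$ with $y \neq x$ we have $(σ_i - x)(y) = σ_i(y)$ and $y ∈ S \setminus \{x\}$; instantiating $S$ with the free-variable set of the relevant subterm discharges the hypothesis needed to invoke the induction hypothesis.

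The main obstacle is to choose this generalization rather than the obvious one: a direct induction (on $M$, or on the typing derivation) with the statement phrased in terms of $Γ$ does not go through, because under a binder the premise's context is \emph{extended} by $x:A$ whereas the substitution is instead \emph{shrunk} to $σ - x$, so $σ - x$ is no longer related to the premise's context, and one ends up carrying side conditions about whether keys of $σ'$ collide with bound variables. Replacing the $Γ$-indexed invariant with the purely semantic ``agreement on $\mathrm{fv}(M)$'' --- and confining the use of typing to the one-line containment $\mathrm{fv}(M) \subseteq \mathrm{dom}(Γ)$ --- decouples the induction from the typing context and reduces each remaining case to routine manipulation of finite maps of the kind already collected in \cref{lem:map_forall2-simple-facts}. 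Note that \cref{lem:subst-comp} is not needed for this route.
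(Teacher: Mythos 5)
Your proof is correct, but it takes a genuinely different route from the paper's. The paper disposes of \cref{lem:discard} with a single induction on the typing derivation, keeping the invariant phrased in terms of $Γ$ and the relatedness of $σ$ to $Γ$ throughout; you instead factor the statement through a purely term-level \emph{agreement lemma} ($\hat{σ}(M)$ depends only on $σ$ restricted to $\mathrm{fv}(M)$, by structural induction on the raw term) and confine the use of the typing derivation to the containment $\mathrm{fv}(M) \subseteq \mathrm{dom}(Γ)$. Your stated motivation for this refactoring is a fair one: in the binder cases (\textsc{Cut}, \textsc{$⊸$-Right}, \textsc{$⊗$-Left}, \ldots) the premise's context is extended by $x:A$ while \cref{fig:subst} shrinks the substitution to $σ - x$, so the literal induction hypothesis of the paper's one-line proof --- which demands a substitution related to the \emph{extended} context --- is not directly applicable, and some strengthening or side argument is needed; your decomposition is precisely one clean way to supply it. What the paper's route buys is economy: one induction, no free-variable function on raw terms, and no auxiliary lemmas. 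What your route buys is a reusable, typing-independent key lemma about $\hat{σ}$ (agreement on free variables), a trivial combination step via \cref{lem:map_forall2-simple-facts}(1), and case analyses that reduce to small finite-map facts such as your observation that agreement on $S \setminus \{x\}$ lifts to agreement of $σ_1 - x$ and $σ_2 - x$ on $S$. Both arguments establish the lemma; yours makes explicit a generalization that the paper's proof sketch leaves implicit.
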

\begin{proof}
By induction on the typing derivation of $M$.
\end{proof}

\subsubsection{Stepping Rules}

The stepping rules of the language are defined to instantiate a process language structure, as defined in \cref{defn:proc-lang}.

\begin{defn}
We implement a process language (\cref{defn:proc-lang}),
with nameless objects $\NObj$ being the terms in~\cref{fig:term-grammar},
and stepping rules specified in~\cref{fig:dynamics}.
\end{defn}

In~\cref{fig:dynamics}, the channel name in the left-hand
side process is always universally quantified, for $a, b, c ∈ \Labels$.

\begin{figure}[h!]
\[\begin{array}{lll}
  ({\fwdEx(← b)}, a) & \stepsTo{ε}_\Obj & \procFwd a b {} \\
  ({\letEx~x:A ← M_1; M_2}, a) & \stepsTo{ε}_\Obj & \procEx{b'}{M_2} \fmconcat{} \procEx a {M_1}, \text{for}~b'∈ \Labels \\
  ({\sendEx()}, a) & \stepsTo{a!()}_\Obj & \varnothing \\
  ({\recvEx_a();M}, b) & \stepsTo{a?()}_\Obj & \procEx b M \\
  ({\recvEx(x ⇒ M)}, b) & \stepsTo{b?a}_\Obj & \procEx b {\widehat{a/x}(M)} \\
  ({\sendEx_b(a);M}, c) & \stepsTo{b!a}_\Obj & \procEx c M \\
  ({\sendEx(a);M}, b) & \stepsTo{b!a}_\Obj & \procEx b M \\
  ({\recvEx_b(x ⇒ M)}, c) & \stepsTo{b?a}_\Obj & \procEx c {\widehat{a/x}(M)} \\
  ({\recvEx(π_1 ⇒ M_1 \mid π_2 ⇒ M_2)}, a) & \stepsTo{a?π_i}_\Obj & \procEx a {M_i}, \text{for}~i∈\set{1,2} \\
  ({\sendEx_a(π_i);M}, b) & \stepsTo{a!π_i}_\Obj & \procEx b M, \text{for}~i∈\set{1,2} \\
  ({\sendEx(π_i);M}, a) & \stepsTo{a!π_i}_\Obj & \procEx a M, \text{for}~i∈\set{1,2} \\
  ({\recvEx_a(π_1 ⇒ M_1 \mid π_2 ⇒ M_2)}, b) & \stepsTo{a?π_i}_\Obj & \procEx b {M_i}, \text{for}~i∈\set{1,2}
\end{array}\]
\caption{Object-level stepping of the process language.}%
\label{fig:dynamics}
\end{figure}

\subsubsection{Fundamental Theorem}

Before we can state the FTLR, we need some auxiliary definitions related to the typing context $Γ$. In particular,
we define the notion of \emph{complementary configurations} $S$ for a context $Γ$, which is a finite map from 
the variables in $Γ$ to values $(\nvCfg, a) ∈ \NCfg × \Labels$. We can consider each pair $(\nvCfg, a)$ to be
an object with its providing channel, and the map $S$ connects variables from $Γ$ to these objects as identified by their providing channel. 
At runtime, these variables will be substituted with channel names,
so we need a way to guarantee that these connected objects are also ``well-behaved'' as dictated by the logical relation.
This gives rise to the following definition:

\begin{defn}[Complementary Configurations]\label{defn:compl}
We define the set of \emph{complementary configurations}
for a context $Γ$ as the finite map $S$ valued in $\NCfg × \Labels$
such that $S$ and $Γ$ are related (\cref{defn:map_forall2}) by the logical relation $\termInterp{-}$.

In that case, we say $S$ complements $Γ$, denoted $S ∈ \termInterp{Γ}$.
\end{defn}

We may also apply the second projection to $S$ to obtain a substitution.
Using the notation in~\cref{defn:map_ops}, we may write $S.2$ for the substitution.

We also have the following lemma for extending complementary configurations with variables not contained in $Γ$:

\begin{lem}\label{lem:compl}
Suppose $\nvCfg ∈ \termInterp{A}$, $a ∈ \Labels$, $S ∈ \termInterp{Γ}$, and $x ∉ Γ$.
Then, \[(S, \set{x\mapsto (\nvCfg, a)}) ∈ \termInterp{Γ, x:A}.\]
\end{lem}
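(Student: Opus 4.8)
The plan is to unfold the two definitions involved---complementary configurations (\cref{defn:compl}) and the elementwise lifting of a relation to finite maps (\cref{defn:map_forall2})---and reduce the goal to a pointwise check. Write $S' := (S, \set{x \mapsto (\nvCfg, a)})$ and $Γ' := (Γ, x : A)$, and recall that, read as a relation, $\termInterp{-}$ relates a pair in $\NCfg × \Labels$ to a type exactly when the first component of the pair lies in the term interpretation of that type. By \cref{defn:compl} and \cref{defn:map_forall2}, the conclusion $S' ∈ \termInterp{Γ'}$ then amounts to showing that for every variable $y$, either $S'$ and $Γ'$ are both undefined at $y$, or both are defined at $y$ and the first component of $S'[y]$ lies in $\termInterp{Γ'[y]}$. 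I would discharge this by a case split on whether $y = x$.

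For $y = x$: since $S ∈ \termInterp{Γ}$, the first item of \cref{lem:map_forall2-simple-facts} tells us $S$ and $Γ$ have the same domain, so the hypothesis $x ∉ Γ$ also guarantees that $x$ is not in the domain of $S$. Hence both updates are genuine extensions, $S'[x] = (\nvCfg, a)$ and $Γ'[x] = A$ are both defined, and the condition to check at $x$ is exactly $\nvCfg ∈ \termInterp{A}$---the first hypothesis.

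For $y \neq x$: updating $S$ and $Γ$ at the key $x$ leaves the entries at $y$ untouched, so $S'[y] = S[y]$ and $Γ'[y] = Γ[y]$ as partial-map lookups, including the ``both undefined'' alternative. The dichotomy to be established at $y$ is then precisely the one obtained by instantiating the hypothesis $S ∈ \termInterp{Γ}$ at $y$, which closes the case.

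I do not anticipate a genuine obstacle here: the only content is the routine finite-map bookkeeping that an update at a key fresh for $Γ$---equivalently, by the same-domain fact, fresh for $S$---changes no other entry and overwrites nothing, together with threading through the reading of $\termInterp{-}$ as a relation between $\NCfg × \Labels$ and types that discards the providing channel. In the Coq mechanization this is a short lemma closed by destructing the map operations and appealing to \cref{defn:map_forall2}; one could even drop the hypothesis $x ∉ Γ$, since an overwriting update would still yield a related pair, but keeping it presents the result as a clean extension principle.
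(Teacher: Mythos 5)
Your proposal is correct and matches the paper's proof, which simply cites \cref{lem:map_forall2-simple-facts}; your pointwise case split on $y = x$ versus $y \neq x$, together with the same-domain fact to see that $x$ is fresh for $S$, is exactly the ``straightforward'' content the paper leaves implicit.
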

\begin{proof}
This follows straightforwardly from~\cref{lem:map_forall2-simple-facts}.
\end{proof}

The following definition describes the operation of ``applying'' complementary configurations $S$
to a nameless atomic process $\nvProc$. This operation, in a sense,
``links'' a nameless atomic process with all of the configurations it communicates with
by instantiating each of the $\nvCfg$s in the $S$ map with their providing channel.

\begin{defn}[applyCompl]\label{defn:applyCompl}
We may \emph{apply} complementary configurations $S ∈ \termInterp{Γ}$
to a nameless atomic process $\nvProc$ by taking the codomain of $S$
as a finite multiset of $\NCfg × \Labels$, and for each element $(\nvCfg, a)$ in the multiset,
we turn it into $\nvCfg[a]$ and union them all into a configuration $\vCfg$.
We then put this result together with $\nvProc$ to obtain $(\vCfg, \nvProc) ∈ \NCfg$.

We denote this operation as $\applyCompl{S}{\nvProc}$.
\end{defn}

With all of these definitions in place, we may now define the FTLR, which will allow us to perform the ``once-and-for-all'' verification for our process language.

\begin{theorem}[FTLR]\label{thm:ftlr}
For all typing context $Γ$, process term $M$, type $A$, and finite map $S$,
if $Γ ⊢ M :: A$ and $S ∈ \termInterp{Γ}$, then
$\applyCompl{S}{\hat{σ}(M)} ∈ \termInterp{A}$ where $σ = S.2$.
\end{theorem}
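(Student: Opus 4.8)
The plan is to prove the FTLR (\cref{thm:ftlr}) by induction on the typing derivation $\Gamma \vdash M :: A$, handling each rule of \cref{fig:typing-rules} in turn. Every case follows the same recipe: unfold $\applyCompl{S}{\hat{\sigma}(M)}$ with \cref{defn:applyCompl}, so that after instantiating at an arbitrary channel the configuration reads $\vCfg_S \fmconcat \procEx{a}{\hat{\sigma}(M)}$, where $\vCfg_S$ is the multiset union of $\nvCfg_0[c]$ over all $(\nvCfg_0,c)$ in the codomain of $S$; fire the object-level rule of \cref{fig:dynamics} matching the head of $M$, lifted to configurations by \textsc{Step-Obj} and \textsc{Step-Frame} (and, for the left rules, \textsc{Step-Comm}); observe that the residual configuration coincides, up to reassociation of $\fmconcat$, with $\applyCompl{S'}{\widehat{\sigma'}(M')}$ for a restructured complementary map $S'$ and an immediate subterm $M'$; and conclude from the induction hypothesis on $M'$ together with backwards closure (\cref{lem:back-closure}). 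The identification of the residual with $\applyCompl{S'}{\widehat{\sigma'}(M')}$ is exactly where \cref{lem:subst-comp} (rewriting $\widehat{b/x}(\hat{\sigma}(-))$ as $\widehat{\sigma,b/x}(-)$), \cref{lem:discard} (stripping from $\sigma$ the entries not used by a subterm), and \cref{lem:compl} (re-establishing that $S'$ complements its context) come in. For the five right rules a single object-level step already lands the configuration directly in the value interpretation $\valueInterp{A}$ of the conclusion type, with the continuations supplied by the induction hypotheses and, in \textsc{$⊗$-Right}, with the emitted channel read off the entry $S$ assigns to the sent variable; since $\valueInterp{A} \subseteq \termInterp{A}$ by reflexivity of $\stepsTo{}_\NCfg^*$, these cases are done.

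For the five left rules the template is uniform: the entry $x \mapsto (\nvCfg_x, c)$ that $S$ assigns to the variable decomposed by the rule satisfies $\nvCfg_x \in \termInterp{-}$, so $\nvCfg_x[c]$ first reduces silently to some value-interpretation configuration $\nvCfg_x'[c]$, the remainder of the configuration being carried along by the multistep frame lemma; that value interpretation then provides exactly one synchronizing transition (a receive of $()$, a send or receive of a channel, or a send or receive of a label), which composes by \textsc{Step-Comm} with the matching transition of $\procEx{a}{\hat{\sigma}(M)}$, landing at $\applyCompl{S'}{\widehat{\sigma'}(M')}$ where $S'$ replaces the entry for $x$ by the continuation configuration delivered by the value interpretation (and, in \textsc{$⊗$-Left}, additionally binds $y$ to the freshly received channel). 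The \textsc{Cut} case is the only one invoking the induction hypothesis twice: split $S$ along the context split into $S_1 \in \termInterp{\Gamma_1}$ and $S_2 \in \termInterp{\Gamma_2}$, choose a fresh channel $b'$, obtain $\nvCfg_A := \applyCompl{S_1}{\widehat{S_1.2}(M_1)} \in \termInterp{A}$ from the first hypothesis, extend $S_2$ to $(S_2, \set{x \mapsto (\nvCfg_A, b')}) \in \termInterp{\Gamma_2, x:A}$ via \cref{lem:compl}, and apply the second hypothesis to $M_2$; the $\letEx$ step produces exactly that second $\applyCompl{-}{-}$ once the substitution lemmas are applied. The \textsc{Id} case is the genuinely different one: from $S = \set{x \mapsto (\nvCfg, a)}$ one shows that for every channel $b$, $\nvCfg[a] \fmconcat \procEx{b}{\fwdEx(← a)}$ silently steps to $\nvCfg[a] \fmconcat \procFwd{b}{a}$ and thence, by \textsc{Step-Fwd} fired with the empty action together with the multistep frame lemma, to $\nvCfg[b]$, so that $\applyCompl{S}{\fwdEx(← a)} \stepsTo{}_\NCfg^* \nvCfg \in \termInterp{A}$ and backwards closure finishes.

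I expect the difficulty to be twofold. The \textsc{Id} and \textsc{Cut} cases break the ``one rule, then reassociate'' rhythm of the others: \textsc{Id} hinges on recognizing that a forwarder is absorbed into its client via \textsc{Step-Fwd} instantiated at the empty action, and \textsc{Cut} is the one place where a complementary map is assembled from the output of a prior appeal to the induction hypothesis, so the fresh channel $b'$ and the interplay between $\sigma$ and $\sigma - x$ must be tracked carefully. More pervasively, every case rests on the equation ``one step of $\applyCompl{S}{\hat{\sigma}(M)}$ equals $\applyCompl{S'}{\widehat{\sigma'}(M')}$'', which unfolds into a multiset equality modulo $\fmconcat$-reassociation together with \cref{lem:subst-comp,lem:discard}; checking that the codomain multiset of $S$ splits precisely as each rule requires, and that what $\hat{(-)}$ adds to or deletes from $\sigma$ mirrors the surgery performed on $S$, is where the bulk of the real work---and, in the Coq development, the bulk of the proof script---resides. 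No individual case is conceptually deep; the substance lies in maintaining, through every rule, the invariant that links $S$, $\sigma$, the typing context, and the runtime configuration.
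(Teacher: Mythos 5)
Your proposal is correct and follows what is evidently the intended argument: the paper states \cref{thm:ftlr} without a written proof (it is discharged in the Coq mechanization), but the scaffolding it sets up---induction on the typing derivation, \cref{lem:back-closure}, the multistep frame lemma, \cref{lem:subst-comp}, \cref{lem:discard}, and \cref{lem:compl}---is exactly the toolkit your case analysis deploys, and each of your twelve cases (right rules landing in $\valueInterp{-}$ in one labelled step, left rules synchronizing via \textsc{Step-Comm} against the value interpretation of the entry of $S$ for the decomposed variable, \textsc{Id} via \textsc{Step-Fwd}, \textsc{Cut} via context splitting and \cref{lem:compl}) checks out against the definitions in \cref{fig:logical-relation,fig:dynamics}. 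The only friction is with the paper's $\letEx$ stepping rule as literally printed (which spawns $M_2$ at the fresh channel $b'$ and performs no substitution for $x$); your \textsc{Cut} case assumes the standard reading in which $M_1$ gets the fresh channel and $x$ is substituted by it in $M_2$, which is the reading under which the theorem actually goes through.
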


We then have the following corollary, which states that if a nameless atomic process $\nvProc$ provides a communication of type $1$ in an empty context,
then there must exist a configuration it steps to which can send a closing signal and terminate.
This result amounts to our \emph{adequacy} result, which is a way for us to ``check'' that inhabitants of the logical relation indeed have the expected property.

\begin{cor}[Adequacy]\label{cor:adequacy}
For channel name $a$, if $[] ⊢ \nvProc :: 1$ then there exists $Ω$ such that
$\procEx a {\nvProc} \stepsTo{}^* Ω$ and $Ω \stepsTo{a!()} []$.
\end{cor}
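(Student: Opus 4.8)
The plan is to reduce the statement to the FTLR (\cref{thm:ftlr}) specialized to the empty typing context, and then to unwind the definitions of the logical relation.

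First I would instantiate \cref{thm:ftlr} with $Γ = []$, $M = \nvProc$, and $A = 1$; the hypothesis $Γ ⊢ M :: A$ is precisely the assumption $[] ⊢ \nvProc :: 1$. To apply the theorem I must supply some $S ∈ \termInterp{[]}$. By \cref{lem:map_forall2-simple-facts} (only empty maps are related to empty maps) the sole candidate is the empty map, which vacuously complements $[]$. For this $S$ the induced substitution $σ = S.2$ is empty, and since $\nvProc$ is closed (it is typed under the empty context) we have $\hat{σ}(\nvProc) = \nvProc$. Unfolding \cref{defn:applyCompl}, the codomain of the empty map is the empty multiset, so $\applyCompl{S}{\nvProc} = (\varnothing, \nvProc)$. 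Hence \cref{thm:ftlr} yields $(\varnothing, \nvProc) ∈ \termInterp{1}$.

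Next I would unwind the logical relation at type $1$. By \cref{termInterp}, $(\varnothing, \nvProc)$ reaches, via $\stepsTo{}_\NCfg^*$, some element of $\valueInterp{1}$. Instantiating the nameless multistep relation and the value interpretation at the channel name $a$ given in the statement, and using that the instantiation operation of \cref{defn:runtime-system} gives $(\varnothing, \nvProc)[a] = \procEx a {\nvProc}$, we obtain a configuration $Ω$ with $\procEx a {\nvProc} \stepsTo{}^* Ω$ and, by \cref{valueInterp:one}, $Ω \stepsTo{a!()} \varnothing = []$. This is exactly the required conclusion.

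I do not anticipate a genuine obstacle: the proof is a short chain of definitional unfoldings built on top of the FTLR. The only steps that warrant a remark are (i) that $\termInterp{[]}$ forces $S$ to be the empty map, so that $\applyCompl{S}{-}$ collapses to $(\varnothing, -)$ and the FTLR degenerates to a statement about $(\varnothing, \nvProc)$, and (ii) that the empty substitution acts as the identity on the closed term $\nvProc$. It is worth noting that the result really does rely on the FTLR: nothing short of it forces a term merely typed at $1$ to actually perform the closing send that $\valueInterp{1}$ demands.
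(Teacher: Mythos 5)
Your proposal is correct and follows the same route as the paper's own proof, which simply invokes the FTLR to get $\nvProc ∈ \termInterp{1}$ and then unfolds the definition of the term interpretation; you have merely spelled out the intermediate steps (the empty complementary map, the identity substitution, and the instantiation at $a$) in more detail.
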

\begin{proof}
By FTLR, $\nvProc ∈ \termInterp{1}$, and by definition of term interpretation.
\end{proof}

\section{Mechanization}%
\label{sec:mechanization}
The above results, notably~\cref{thm:ftlr,cor:adequacy,thm:inhabit}, are mechanized in Coq.
We heavily utilize the \textcode{gmap} and \textcode{gmultiset} data structures from the \textlibraryname{stdpp} library~\cite{gmap, stdpp}.
The context $Γ$ in the typing judgment $Γ ⊢ M :: A$ is represented as a \textcode{gmap}
from variable names to types, and configurations are represented as \textcode{gmultiset}s of atomic processes.

Recall the $Σ$-type used in~\cref{sec:verification-framework}:
\[ \sum_{S : \ProcLang}S.\NObj, \]
which we put into a multiset.
Unfortunately, \textcode{gmultiset} requires its elements to be countable and have decidable equality,
and for a $Σ$-type, this means that (at least) both projections must also have those properties.
It is very difficult to prove countability for an instance of $\ProcLang$,
because the dynamics of the language is an indexed proposition, which cannot be meaningfully encoded as a positive number.
So, we came up with an alternate representation to work with the requirements of \textcode{gmultiset}.
Instead of using
\( \sum_{S : \ProcLang}S.\NObj, \)
we define the datatype \textcode{LangName} to enumerate the ``tags'' of all the process language structures in $\ProcLang$,
and a function $\textcode{LangTable} : \textcode{LangName} → \ProcLang$
to their respective structures. We then redefine the $Σ$-type as the following:
\[ \sum_{l: \textcode{LangName}} \textcode{LangTable}(l).\NObj. \]
To guarantee extensibility,
we do not case on the \textcode{LangName} anywhere besides in the \textcode{LangTable} function
and the proof that \textcode{LangName} itself is countable.
Therefore, to extend the mechanization with new languages,
all that one needs to do is define additional tags in the \textcode{LangName} enumeration,
specify their process language structure definitions in the \textcode{LangTable} function,
and reprove its countability.

\section{Discussion and Future Work}%
\label{sec:discussion-future}
We now briefly reflect upon our results
and discuss possible avenues for future work.




The handling of binding and scope of variables,
elegantly taken care of by reliance on $\alpha$-equivalence classes in pen-and-paper proofs,
must be implemented in all its glory when it comes to mechanizations using proof assistants.
Of particular concern is the handling of substitution.
Thankfully, we do not run into the issue of capture of free variables when substituting
because we only ever substitute channels for variables.
However, when implementing the lifting of the substitution $σ$ over process terms $M$ defined in \cref{fig:subst},
we had to make several implementation choices in this respect.
As is usual, we define $\hat{σ}$ by structural induction on the process term $M$.
Notably, these terms are \emph{raw}, and as such untyped.
The definition of $\hat{σ}$ is such that any bound variable encountered
is removed from the substitution $σ$,
using the operation $σ-x$.
Again, this choice is pretty standard; alternatively we could have updated the substitution $σ$
by mapping the bound variable to itself.

If we were to consider typed terms instead of raw terms,
the question comes up how $\hat{σ}$ should handle free variables
for our choice of type system in \cref{sec:once-and-for-all-verification}.
For instance, consider the typing rule for \textsc{$⊗$-Right} (with the explicit assumption $y \notin Γ$ present in our mechanization):
\begin{mathpar}
\inferrule[\textsc{$⊗$-Right}]
  { Γ ⊢ M :: B \\ y \notin Γ }
  { Γ, y : A ⊢ (\sendEx(y); M) :: (A ⊗ B) }
\end{mathpar}
By linear typing, we know that the variable $y$ must not be free in $M$. However, \cref{fig:subst} defines $\hat{σ}$ for this case as $\hat{σ}(\sendEx(y); M) = \sendEx(σ(y)); \hat{σ}(M)$.
If we were to define $\hat{σ}$ for well-typed terms, then it seems more appropriate to define $\hat{σ}$ as $\sendEx(σ(y)); \widehat{σ-y}(M)$.
As of now, we are unclear about which approach is favorable,
especially for the \textsc{Cut} rule.
We leave a careful exploration of this alternative approach for future work.


\section{Related Work}%
\label{sec:related}
We discuss related work in order of relatedness.
\subparagraph*{Mechanizations of logical relations for session types.}

Gollamudi \textit{et al.} \cite{GollamudiCoqPL2025} contributed the first mechanization of a semantic
logical relation for session types. Our work differs from theirs in the following key aspects: (1) 
\textit{Op. cit.} revolves around a singular and distinguished intuitionistic linear logic session type
language. This is evident from the explicit use of a process term syntax in the definition of their 
logical relation. Additionally, processes in their setting synchronize by juxtaposing complementary
process terms. In contrast, processes in our work synchronize through complementary actions, which 
is critical for language interoperability. (2) \textit{Op. cit.} enforces a strict rooted-tree 
communication structure with their support of scoped channels. While this provides an elegant method of 
handling name allocation, this restriction is detrimental for our purposes. Our work recognizes the value
of decoupling name assignment for a process from its semantic characterization, and in response, we propose
nameless objects. Our use of nameless objects reaps the majority of the benefits afforded by 
\textit{op. cit.}'s scoped channels, while preserving the flexibility of communication structures.
Our nameless objects are inspired by the notion of
nameless family of configurations in \cite{YaoPOPL2025},
with our work being the first to mechanize this notion.

\subparagraph*{Multi-language verification with processes.}  

Our work echos the development in DimSum \cite{SammlerPOPL2023} in a number of ways.
\textit{Op. cit.}, similarly to our work, proposes a process algebra framework for 
multi-language semantics and verification. In this framework, different languages are 
separated into different \emph{modules}, and the modules communicate with each other 
through synchronization of \emph{events}. Our work also models different languages as 
separate processes and allows them to synchronize over a common process substrate with 
messages. However, compared to \textit{op. cit.}, our work can be viewed as 
an alternative take on the idea of multi-language verification with processes where we 
trade flexibility for simplicity, and most importantly uniform composability.

DimSum is a very flexible framework: it allows the user to define different sets of events for 
different languages. However, this flexibility comes at a cost. During verification, 
the user must carefully set up the embedding of events across languages. Given a 
collection of modules, the user must carry out a linkage proof by hand using provided 
proof rules. 

In our system, the communication primitives available are fixed across languages. From 
the point of DimSum, we have chosen a fixed, but rather rich, set of events shared by 
all languages. This allows us to (1) establish fundamental theorems (e.g. \Cref{thm:ftlr}) that automate 
composition across languages and (2) support higher-order channels, where the name of 
a message-passing entity is passed as data to another entity.

\subparagraph*{Logical relations for session types.}

While our work contributes the second mechanization of logical relations for session types,
the theory of logical relations for session types is an active area of research.
The focus so far has been predominantly on unary logical relations
for proving termination \cite{PerezESOP2012, PerezARTICLE2014,DeYoungFSCD2020, RochaCairesESOP2023}.
Our work shares with this line of work the focus on proving protocol compliance and thus termination,
but differs in that all of our results are mechanized.
More recently,
binary logical relations for session types have been developed
for parametricity \cite{CairesESOP2013} and
noninterference \cite{DerakhshanLICS2021,DerakhshanECOOP2024,VanDenHeuvelECOOP2024} as well as
for program equivalence in general \cite{BalzerARXIV2023}
with support of general recursive types.
Of particular interest are the developments by Derakhshan \textit{et al.} \cite{DerakhshanLICS2021}
and Balzer \textit{et al.} \cite{BalzerARXIV2023},
which index the binary logical relation with an intuitionistic sequent,
rather than a single type,
and thus are able to semantically capture the resource semantics arising from linear logic.
We would like to explore mechanizing such an approach as part of future work,
which is still outstanding.

\section{Conclusion}
\label{sec:conlusion}
In this paper, we have defined, utilized, and mechanized a verification framework for heterogeneous applications.
Our language-agnostic approach allows us to employ this framework in the certification of the heterogeneous message-passing systems of today,
ranging from multi-language systems to systems consisting of both hardware and software components. 
The mechanization of this framework not only verifies the methods and proofs presented in this paper,
but provides the foundation for scaling this approach to further applications as the computing landscape continues to grow.

\bibliography{ref}

\begin{thebibliography}{10}

\bibitem{AnconaARITCLE2016}
Davide Ancona, Viviana Bono, Mario Bravetti, Joana Campos, Giuseppe Castagna,
  Pierre{-}Malo Deni{\'{e}}lou, Simon~J. Gay, Nils Gesbert, Elena Giachino,
  Raymond Hu, Einar~Broch Johnsen, Francisco Martins, Viviana Mascardi,
  Fabrizio Montesi, Rumyana Neykova, Nicholas Ng, Luca Padovani, Vasco~T.
  Vasconcelos, and Nobuko Yoshida.
\newblock Behavioral types in programming languages.
\newblock {\em Foundations and Trends in Programming Languages},
  3(2-3):95--230, 2016.
\newblock \href {https://doi.org/10.1561/2500000031}
  {\path{doi:10.1561/2500000031}}.

\bibitem{BalzerARXIV2023}
Stephanie Balzer, Farzaneh Derakhshan, Robert Harper, and Yue Yao.
\newblock Logical relations for session-typed concurrency.
\newblock {\em CoRR}, abs/2309.00192, 2023.
\newblock URL: \url{https://doi.org/10.48550/arXiv.2309.00192}, \href
  {https://arxiv.org/abs/2309.00192} {\path{arXiv:2309.00192}}, \href
  {https://doi.org/10.48550/ARXIV.2309.00192}
  {\path{doi:10.48550/ARXIV.2309.00192}}.

\bibitem{BentonICFP2009}
Nick Benton and Chung{-}Kil Hur.
\newblock Biorthogonality, step-indexing and compiler correctness.
\newblock In {\em 14th {ACM} {SIGPLAN} International Conference on Functional
  Programming ({ICFP})}, pages 97--108. {ACM}, 2009.
\newblock \href {https://doi.org/10.1145/1596550.1596567}
  {\path{doi:10.1145/1596550.1596567}}.

\bibitem{CairesESOP2013}
Lu{\'{\i}}s Caires, Jorge~A. P{\'{e}}rez, Frank Pfenning, and Bernardo Toninho.
\newblock Behavioral polymorphism and parametricity in session-based
  communication.
\newblock In {\em 22nd European Symposium on Programming ({ESOP})}, volume 7792
  of {\em Lecture Notes in Computer Science}, pages 330--349. Springer, 2013.
\newblock \href {https://doi.org/10.1007/978-3-642-37036-6\_19}
  {\path{doi:10.1007/978-3-642-37036-6\_19}}.

\bibitem{CairesCONCUR2010}
Lu{\'{\i}}s Caires and Frank Pfenning.
\newblock Session types as intuitionistic linear propositions.
\newblock In {\em 21th International Conference onf Concurrency Theory
  ({CONCUR})}, volume 6269 of {\em Lecture Notes in Computer Science}, pages
  222--236. Springer, 2010.
\newblock \href {https://doi.org/10.1007/978-3-642-15375-4\_16}
  {\path{doi:10.1007/978-3-642-15375-4\_16}}.

\bibitem{ChlipalaPLDI2007}
Adam Chlipala.
\newblock A certified type-preserving compiler from lambda calculus to assembly
  language.
\newblock In {\em 28th {ACM} {SIGPLAN} Conference on Programming Language
  Design and Implementation ({PLDI})}, pages 54--65. {ACM}, 2007.
\newblock \href {https://doi.org/10.1145/1250734.1250742}
  {\path{doi:10.1145/1250734.1250742}}.

\bibitem{ConstableBook1986}
Robert~L. Constable, Stuart~F. Allen, Mark Bromley, Rance Cleaveland, J.~F.
  Cremer, Robert Harper, Douglas~J. Howe, Todd~B. Knoblock, Nax~Paul Mendler,
  Prakash Panangaden, James~T. Sasaki, and Scott~F. Smith.
\newblock {\em Implementing Mathematics with the {Nuprl} Proof Development
  System}.
\newblock Prentice Hall, 1986.
\newblock URL: \url{http://dl.acm.org/citation.cfm?id=10510}.

\bibitem{DerakhshanLICS2021}
Farzaneh Derakhshan, Stephanie Balzer, and Limin Jia.
\newblock Session logical relations for noninterference.
\newblock In {\em 36th Annual {ACM/IEEE} Symposium on Logic in Computer Science
  ({LICS})}, pages 1--14. {IEEE} Computer Society, 2021.
\newblock \href {https://doi.org/10.1109/LICS52264.2021.9470654}
  {\path{doi:10.1109/LICS52264.2021.9470654}}.

\bibitem{DerakhshanECOOP2024}
Farzaneh Derakhshan, Stephanie Balzer, and Yue Yao.
\newblock Regrading policies for flexible information flow control in
  session-typed concurrency.
\newblock In {\em 38th European Conference on Object-Oriented Programming
  (ECOOP)}, volume 313 of {\em LIPIcs}, pages 11:1--11:29. Schloss Dagstuhl -
  Leibniz-Zentrum f{\"{u}}r Informatik, 2024.
\newblock URL: \url{https://doi.org/10.4230/LIPIcs.ECOOP.2024.11}, \href
  {https://doi.org/10.4230/LIPICS.ECOOP.2024.11}
  {\path{doi:10.4230/LIPICS.ECOOP.2024.11}}.

\bibitem{DeYoungFSCD2020}
Henry DeYoung, Frank Pfenning, and Klaas Pruiksma.
\newblock Semi-axiomatic sequent calculus.
\newblock In {\em 5th International Conference on Formal Structures for
  Computation and Deduction ({FSCD})}, volume 167 of {\em LIPIcs}, pages
  29:1--29:22. Schloss Dagstuhl - Leibniz-Zentrum f{\"{u}}r Informatik, 2020.
\newblock \href {https://doi.org/10.4230/LIPIcs.FSCD.2020.29}
  {\path{doi:10.4230/LIPIcs.FSCD.2020.29}}.

\bibitem{GayRavaraBOOK2017}
Simon~J. Gay and Ant{\'{o}}nio Ravara.
\newblock {\em Behavioural Types: From Theory to Tools}.
\newblock River Publishers, 2017.

\bibitem{GirardPhD1972}
Jean-Yves Girard.
\newblock {\em Interpr\'{e}tation fonctionelle et \'{e}limination des coupures
  de l'arithm\'{e}tique d'ordre sup\'{e}rieur}.
\newblock PhD thesis, Universit\'{e} Paris VII, 1972.

\bibitem{GollamudiCoqPL2025}
Tarakaram Gollamudi, Jules Jacobs, Yue Yao, and Stephanie Balzer.
\newblock A semantic logical relation for termination of intuitionistic linear
  logic session types.
\newblock In {\em 11th International Workshop on Coq for Programming Languages
  ({CoqPL})}, 2025.

\bibitem{HondaCONCUR1993}
Kohei Honda.
\newblock Types for dyadic interaction.
\newblock In {\em 4th International Conference on Concurrency Theory
  ({CONCUR})}, volume 715 of {\em Lecture Notes in Computer Science}, pages
  509--523. Springer, 1993.
\newblock \href {https://doi.org/10.1007/3-540-57208-2\_35}
  {\path{doi:10.1007/3-540-57208-2\_35}}.

\bibitem{HondaESOP1998}
Kohei Honda, Vasco~Thudichum Vasconcelos, and Makoto Kubo.
\newblock Language primitives and type discipline for structured
  communication-based programming.
\newblock In {\em 7th European Symposium on Programming ({ESOP})}, volume 1381
  of {\em Lecture Notes in Computer Science}, pages 122--138. Springer, 1998.
\newblock \href {https://doi.org/10.1007/BFb0053567}
  {\path{doi:10.1007/BFb0053567}}.

\bibitem{HondaPOPL2008}
Kohei Honda, Nobuko Yoshida, and Marco Carbone.
\newblock Multiparty asynchronous session types.
\newblock In {\em 35th {ACM} {SIGPLAN-SIGACT} Symposium on Principles of
  Programming Languages ({POPL})}, pages 273--284. {ACM}, 2008.
\newblock \href {https://doi.org/10.1145/1328438.1328472}
  {\path{doi:10.1145/1328438.1328472}}.

\bibitem{KokkePOPL2019}
Wen Kokke, Fabrizio Montesi, and Marco Peressotti.
\newblock Better late than never: A fully-abstract semantics for classical
  processes.
\newblock {\em Proceedings of the {ACM} on Programming Languages},
  3({POPL}):24:1--24:29, 2019.
\newblock \href {https://doi.org/10.1145/3290337} {\path{doi:10.1145/3290337}}.

\bibitem{gmap}
Robbert Krebbers.
\newblock Efficient, extensional, and generic finite maps in coq-std++.
\newblock 2023.
\newblock URL:
  \url{https://coq-workshop.gitlab.io/2023/abstracts/coq2023_finmap-stdpp.pdf}.

\bibitem{LindleyMorrisESOP2015}
Sam Lindley and J.~Garrett Morris.
\newblock A semantics for propositions as sessions.
\newblock In {\em 24th European Symposium on Programming ({ESOP})}, volume 9032
  of {\em Lecture Notes in Computer Science}, pages 560--584. Springer, 2015.
\newblock \href {https://doi.org/10.1007/978-3-662-46669-8\_23}
  {\path{doi:10.1007/978-3-662-46669-8\_23}}.

\bibitem{LoefARTICLE1982}
Per Martin{-}L{\"{o}}f.
\newblock Constructive mathematics and computer programming.
\newblock In {\em Logic, Methodology and Philosophy of Science VI}, volume 104
  of {\em Studies in Logic and the Foundations of Mathematics}, pages 153--175.
  Elsevier, 1982.
\newblock URL:
  \url{https://www.sciencedirect.com/science/article/pii/S0049237X09701892},
  \href {https://doi.org/10.1016/S0049-237X(09)70189-2}
  {\path{doi:10.1016/S0049-237X(09)70189-2}}.

\bibitem{MartinLoefBook1984}
Per Martin{-}L{\"{o}}f.
\newblock {\em Intuitionistic Type Theory}, volume~1 of {\em Studies in Proof
  Theory}.
\newblock Bibliopolis, 1984.

\bibitem{MilnerBook1980}
Robin Milner.
\newblock {\em A Calculus of Communicating Systems}, volume~92 of {\em Lecture
  Notes in Computer Science}.
\newblock Springer, 1980.
\newblock \href {https://doi.org/10.1007/3-540-10235-3}
  {\path{doi:10.1007/3-540-10235-3}}.

\bibitem{MilnerBook1999}
Robin Milner.
\newblock {\em Communicating and Mobile Systems: the $\pi$-calculus}.
\newblock Cambridge University Press, 1999.

\bibitem{MinamidePOPL1996}
Yasuhiko Minamide, Greg Morrisett, and Robert Harper.
\newblock Typed closure conversion.
\newblock In {\em 23rd {ACM} {SIGPLAN-SIGACT} Symposium on Principles of
  Programming Languages ({POPL})}, pages 271--283. {ACM}, 1996.
\newblock \href {https://doi.org/10.1145/237721.237791}
  {\path{doi:10.1145/237721.237791}}.

\bibitem{PattersonPLDI2022}
Daniel Patterson, Noble Mushtak, Andrew Wagner, and Amal Ahmed.
\newblock Semantic soundness for language interoperability.
\newblock In {\em 43rd {ACM} {SIGPLAN} Conference on Programming Language
  Design and Implementation ({PLDI})}, pages 609--624. {ACM}, 2022.
\newblock \href {https://doi.org/10.1145/3519939.3523703}
  {\path{doi:10.1145/3519939.3523703}}.

\bibitem{PerezESOP2012}
Jorge~A. P{\'{e}}rez, Lu{\'{\i}}s Caires, Frank Pfenning, and Bernardo Toninho.
\newblock Linear logical relations for session-based concurrency.
\newblock In {\em 21st European Symposium on Programming ({ESOP})}, volume 7211
  of {\em Lecture Notes in Computer Science}, pages 539--558. Springer, 2012.
\newblock \href {https://doi.org/10.1007/978-3-642-28869-2\_27}
  {\path{doi:10.1007/978-3-642-28869-2\_27}}.

\bibitem{PerezARTICLE2014}
Jorge~A. P{\'{e}}rez, Lu{\'{\i}}s Caires, Frank Pfenning, and Bernardo Toninho.
\newblock Linear logical relations and observational equivalences for
  session-based concurrency.
\newblock {\em Information and Computation}, 239:254--302, 2014.
\newblock \href {https://doi.org/10.1016/j.ic.2014.08.001}
  {\path{doi:10.1016/j.ic.2014.08.001}}.

\bibitem{PittsStarkHOOTS1998}
Andrew~M. Pitts and Ian Stark.
\newblock Operational reasoning for functions with local state.
\newblock {\em Higher Order Operational Techniques in Semantics (HOOTS)}, pages
  227--273, 1998.

\bibitem{PlotkinTR1973}
Gordon~D. Plotkin.
\newblock Lambda-definability and logical relations.
\newblock Technical report, University of Edinburgh, 1973.

\bibitem{RochaCairesESOP2023}
Pedro Rocha and Lu{\'{\i}}s Caires.
\newblock Safe session-based concurrency with shared linear state.
\newblock In {\em 32nd European Symposium on Programming ({ESOP})}, volume
  13990 of {\em Lecture Notes in Computer Science}, pages 421--450. Springer,
  2023.
\newblock \href {https://doi.org/10.1007/978-3-031-30044-8\_16}
  {\path{doi:10.1007/978-3-031-30044-8\_16}}.

\bibitem{SammlerPOPL2023}
Michael Sammler, Simon Spies, Youngju Song, Emanuele D'Osualdo, Robbert
  Krebbers, Deepak Garg, and Derek Dreyer.
\newblock Dimsum: {A} decentralized approach to multi-language semantics and
  verification.
\newblock {\em Proceedings of the ACM on Programming Languages},
  7({POPL}):775--805, 2023.
\newblock \href {https://doi.org/10.1145/3571220} {\path{doi:10.1145/3571220}}.

\bibitem{SangiorgiWalkerBook2001}
Davide Sangiorgi and David Walker.
\newblock {\em The $\pi$-calculus: a Theory of Mobile Processes}.
\newblock Cambridge University Press, 2001.

\bibitem{StatmanARTICLE1985}
Richard Statman.
\newblock Logical relations and the typed $\lambda$-calculus.
\newblock {\em Information and Control}, 65(2/3):85--97, 1985.
\newblock \href {https://doi.org/10.1016/S0019-9958(85)80001-2}
  {\path{doi:10.1016/S0019-9958(85)80001-2}}.

\bibitem{TaitARTICLE1967}
William~W. Tait.
\newblock Intensional interpretations of functionals of finite type {I}.
\newblock {\em The Journal of Symbolic Logic}, 32(2):198--212, 1967.
\newblock URL: \url{http://www.jstor.org/stable/2271658}.

\bibitem{stdpp}
{The std++ developers and contributors}.
\newblock Rocq-std++: {{An}} extended "{{Standard Library}}" for {{Rocq}},
  2024.
\newblock URL: \url{https://gitlab.mpi-sws.org/iris/stdpp/}.

\bibitem{TimanyJACM2024}
Amin Timany, Robbert Krebbers, Derek Dreyer, and Lars Birkedal.
\newblock A logical approach to type soundness.
\newblock {\em Journal of the ACM (JACM)}, 2024.
\newblock To appear.

\bibitem{ToninhoPhD2015}
Bernardo Toninho.
\newblock {\em A Logical Foundation for Session-Based Concurrent Computation}.
\newblock PhD thesis, Carnegie Mellon University and New University of Lisbon,
  2015.

\bibitem{ToninhoESOP2013}
Bernardo Toninho, Lu{\'{\i}}s Caires, and Frank Pfenning.
\newblock Higher-order processes, functions, and sessions: A monadic
  integration.
\newblock In {\em 22nd European Symposium on Programming ({ESOP})}, volume 7792
  of {\em Lecture Notes in Computer Science}, pages 350--369. Springer, 2013.
\newblock \href {https://doi.org/10.1007/978-3-642-37036-6\_20}
  {\path{doi:10.1007/978-3-642-37036-6\_20}}.

\bibitem{VanDenHeuvelECOOP2024}
Bas van~den Heuvel, Farzaneh Derakhshan, and Stephanie Balzer.
\newblock Information flow control in cyclic process networks.
\newblock In {\em 38th European Conference on Object-Oriented Programming
  (ECOOP)}, volume 313 of {\em LIPIcs}, pages 40:1--40:30. Schloss Dagstuhl -
  Leibniz-Zentrum f{\"{u}}r Informatik, 2024.
\newblock URL: \url{https://doi.org/10.4230/LIPIcs.ECOOP.2024.40}, \href
  {https://doi.org/10.4230/LIPICS.ECOOP.2024.40}
  {\path{doi:10.4230/LIPICS.ECOOP.2024.40}}.

\bibitem{WadlerICFP2012}
Philip Wadler.
\newblock Propositions as sessions.
\newblock In {\em {ACM} {SIGPLAN} International Conference on Functional
  Programming ({ICFP})}, pages 273--286. {ACM}, 2012.
\newblock \href {https://doi.org/10.1145/2364527.2364568}
  {\path{doi:10.1145/2364527.2364568}}.

\bibitem{YaoPOPL2025}
Yue Yao, Grant Iraci, Cheng-En Chuang, Stephanie Balzer, and Lukasz Ziarek.
\newblock Semantic logical relations for timed message-passing protocols.
\newblock {\em Proceedings of the ACM on Programming Languages},
  9({POPL}):1750--1781, 2025.
\newblock \href {https://doi.org/10.1145/3704895} {\path{doi:10.1145/3704895}}.

\end{thebibliography}

\appendix

\end{document}